\documentclass[a4paper,11pt]{article}

\usepackage[top=2.7cm, bottom=3cm, left=2.5cm, right=2.5cm,
            footskip=.7in]{geometry}

\usepackage{amsmath}
\usepackage{amssymb}
\usepackage{amsfonts,amsthm,complexity}
\usepackage{thmtools,thm-restate}
\usepackage{tabularx,lipsum,environ}
\usepackage{multirow}
\usepackage{multicol}
\usepackage{hyperref}
\usepackage{xspace}
\usepackage{xifthen}
\usepackage{todonotes}

\usepackage{caption}
\usepackage{subcaption}

\captionsetup{font=small, labelfont={bf}}

\usepackage{authblk}

\usepackage[T1]{fontenc}

\usepackage{lmodern}
\usepackage{setspace}

\newtheorem{theorem}{Theorem}
\newtheorem{lemma}{Lemma}
\newtheorem{Rule}{Rule}[theorem]

\newtheorem{claim}{Claim}[theorem]
\newtheorem{corollary}{Corollary}

\newtheorem{observation}{Observation}

\newcommand{\Oh}{\mathcal{O}}
\newcommand\yes{\textsc{Yes}}
\newcommand\no{\textsc{No}}

\newenvironment{claimproof}[1]{\par\noindent\emph{Proof:}\space#1}{{\leavevmode\unskip\penalty9999 \hbox{}\nobreak\hfill\quad\hbox{$\lrcorner$}}\medskip}

\usepackage{tikz}
 \usetikzlibrary{calc}
\usepackage{thmtools,thm-restate}

\newcommand\cliqueroot{\textsc{Distance-$k$-to-Clique Square Root}}
\newcommand\dpqroot{\textsc{Distance-$k$-to-$(pK_1+qK_2)$ Square Root}}

\newcommand\sroot{\textsc{VC-$k$ Root}}
\newcommand\bccover{\textsc{Biclique Cover}}

\usepackage{enumitem}
\setlist[itemize]{noitemsep}
\setlist[enumerate]{noitemsep}

\newcommand\ematch{\sim_{\text{mt}}}
\newcommand\enest{\sim_{\text{nt}}}
\newcommand\nematch{\nsim_{\text{mt}}}
\newcommand\nenest{\nsim_{\text{nt}}}

\newcommand{\ignore}[1]{}

\usepackage{framed}


\newlength{\RoundedBoxWidth}
\newsavebox{\GrayRoundedBox}
\newenvironment{GrayBox}[1]%
   {\setlength{\RoundedBoxWidth}{.93\textwidth}
    \def\boxheading{#1}
    \begin{lrbox}{\GrayRoundedBox}
       \begin{minipage}{\RoundedBoxWidth}}%
   {   \end{minipage}
    \end{lrbox}
    \begin{center}
    \begin{tikzpicture}%
       \node(Text)[draw=black!20,fill=white,rounded corners,%
             inner ysep=0.5ex,inner xsep=2ex,text width=\RoundedBoxWidth]%
             {\usebox{\GrayRoundedBox}};
        \coordinate(x) at (current bounding box.north west);
        \node [draw=white,rectangle,inner sep=3pt,anchor=north west,fill=white]
        at ($(x)+(6pt,.75em)$) {\boxheading};
    \end{tikzpicture}
    \end{center}}

\newenvironment{defproblemx}[2][]{\noindent\ignorespaces%
                                \FrameSep=6pt%
                                \parindent=0pt%
                \vspace*{-1.5em}
                \ifthenelse{\isempty{#1}}{%
                  \begin{GrayBox}{\textsc{#2}}%
                }{%
                  \begin{GrayBox}{\textsc{#2} parameterized by~{#1}}%
                }
                \begin{tabular*}{\textwidth}{@{\hspace{.1em}} >{\itshape} p{1.4cm} p{0.85\textwidth} @{}}%
            }{
                \end{tabular*}%
                \end{GrayBox}%
                \ignorespacesafterend
            }

\newcommand{\defproblema}[3]{
  \begin{defproblemx}{#1}
    & \\[-8pt]
    Input:  & #2 \\[2pt]
    Task: & #3
  \end{defproblemx}
}%

\newcommand\problemdef[3]{
	\defproblema{#1}{#2}{#3}
}


\title{Graph Square Roots of Small Distance from Degree One Graphs\thanks{
A preliminary version of the paper has been accepted for LATIN 2020.
The paper received  support from the Research Council of Norway via the projects ``CLASSIS'' and ``MULTIVAL". }}

\author[1]{Petr A.\ Golovach}
\author[1]{Paloma~T.\ Lima}
\author[2]{Charis Papadopoulos}

\affil[1]{Department of Informatics, University of Bergen, Norway\\
 \texttt{\{petr.golovach,paloma.lima\}@uib.no}}

\affil[2]{Department of Mathematics, University of Ioannina, Greece\\
 \texttt{charis@uoi.gr}}

\date{}

\pagestyle{plain}

\begin{document}


\maketitle

\begin{abstract}
Given a graph class $\mathcal{H}$, the task of the  $\mathcal{H}$-{\sc Square Root} problem is to decide, whether an input graph $G$ has a square root $H$ from $\mathcal{H}$. We are interested in  the parameterized complexity of the  problem for classes $\mathcal{H}$ that are composed by the graphs at vertex deletion distance at most $k$ from graphs of maximum degree at most one, that is, we are looking for a square root $H$ such that there is a modulator $S$ of size $k$ such that $H-S$ is the disjoint union of isolated vertices and disjoint edges.
 We show that different variants of the problems with constraints on the number of isolated vertices and edges in $H-S$ are \FPT{} when parameterized by $k$ by demonstrating algorithms with running time $2^{2^{\Oh(k)}}\cdot n^{\Oh(1)}$.
 We further show that the running time of our algorithms is asymptotically optimal and it is unlikely that the double-exponential dependence on $k$ could be avoided.
 In particular, we prove that the \sroot{} problem, that asks whether an input graph has a square root with vertex cover of size at most $k$, cannot be solved in time $2^{2^{o(k)}}\cdot n^{\Oh(1)}$ unless Exponential Time Hypothesis fails.
 Moreover, we point out 
that \sroot{} parameterized by $k$ does not admit a subexponential kernel unless $\P=\NP$.
\end{abstract}

\section{Introduction}

Squares of graphs and square roots constitute widely studied concepts in graph theory, both from a structural perspective as well as from an algorithmic point of view. A graph $G$ is the \emph{square} of a graph $H$ if $G$ can be obtained from $H$ by the addition of an edge between any two vertices of $H$ that are at distance two. In this case, the graph $H$ is called a \emph{square root} of $G$. It is interesting to notice that there are graphs that admit different square roots, graphs that have a unique square root and graphs that do not have a square root at all. In 1994, Motwani and Sudan~\cite{MotwaniS94} proved that the problem of determining if a given graph $G$ has a square root is \NP-complete. This problem is known as the {\sc Square Root} problem.

The intractability of {\sc Square Root} has been attacked in two different ways. The first one is by imposing some restrictions on the input graph $G$. In this vein, the {\sc Square Root} problem has been studied in the setting in which $G$ belongs to a specific class of graphs~\cite{CochefertCGKPS18,GKPS16b,GolovachKPS17,LeT10,MilanicS13,MOS14,NT14}.

Another way of coping with the hardness of the {\sc Square Root} problem is by imposing some additional structure on the square root $H$. That is, given the input graph $G$, the task is to determine whether $G$ has a square root $H$ that belongs to a specific graph class $\mathcal{H}$. This setting is known as the {\sc $\mathcal{H}$-Square Root} problem and it is the focus of this work.
%
%
The {\sc $\mathcal{H}$-Square Root} problem has been shown to be polynomial-time solvable for specific graph classes $\mathcal{H}$ \cite{LauC04,LeT10,LeT11,LOS15,LOS}.
To name a few among others, the problem is solved in polynomial time when $\mathcal{H}$ is the class of
trees~\cite{LinS95},
bipartite graphs~\cite{Lau06},
cactus graphs~\cite{GKPS16b},
and, more recently, when $\mathcal{H}$ is the class of
cactus block graphs~\cite{Ducoffe19},
outerplanar graphs~\cite{GHKLP19}, and
graphs of pathwidth at most~2~\cite{GHKLP19}.
It is interesting to notice that the fact that
 $\mathcal{H}$-{\sc Square Root} 
can be efficiently (say, polynomially)  solved
for some class~${\cal H}$ does not automatically imply
 that  $\mathcal{H'}$-{\sc Square Root} is
efficiently
solvable for every subclass
$\mathcal{H}'$ of~$\mathcal{H}$.
On the negative side, {\sc ${\cal H}$-Square Root} remains \NP-complete on graphs of girth at least~5~\cite{FarzadK12},
graphs of girth at least~4~\cite{FarzadLLT12},
split graphs~\cite{LauC04}, and
chordal graphs~\cite{LauC04}.
The fact that all known \NP-hardness constructions involve dense graphs \cite{FarzadK12,FarzadLLT12,LauC04,MotwaniS94} and dense square roots, raised the question of whether
{\sc ${\cal H}$-Square Root} is polynomial-time solvable for every sparse graph class $\mathcal{H}$.

We consider this question from the Parameterized Complexity viewpoint for structural parameterizations of $\mathcal{H}$ (we refer to the book of Cygan et al.~\cite{CyganFKLMPPS15} for an introduction to the field). More precisely, we are interested in graph classes $\mathcal{H}$ that are at {\it small distance} from a (sparse) graph class for which  {\sc ${\cal H}$-Square Root} can be solved in polynomial time.
Within this scope, the distance is usually measured either by the number of edge deletions, edge additions or vertex deletions.
This approach for the problem was first applied by Cochefert et al. in~\cite{Cochefert0GKP16}, who considered {\sc ${\cal H}$-Square Root},
where $\mathcal{H}$ is the class of graphs that have a feedback edge set of size at most $k$, that is, for graphs that can be made forests by at most $k$ edge deletions.
They proved that {\sc ${\cal H}$-Square Root} admits a compression to a special variant of the problem with $\Oh(k^2)$ vertices, implying
that the problem can be solved in $2^{\Oh(k^4)}+\Oh(n^4m)$ time, i.e.,  is fixed-parameter tractable (\FPT) when parameterized by $k$.
Herein, we study whether the same complexity behavior occurs if we measure the distance by the number of {\it vertex deletions} instead of edge deletions.

Towards such an approach, the most natural consideration for {\sc ${\cal H}$-Square Root} is to ask for a square root of feedback \emph{vertex} set of size at most $k$.
The approach used by Cochefert et al.~\cite{Cochefert0GKP16} fails if ${\cal H}$ is the class of graphs that can be made forests by at most $k$ vertex deletions
and the question of the parameterized complexity of our problem for this case is open.
In this context, we consider herein the {\sc ${\cal H}$-Square Root} problem when $\mathcal{H}$ is the class of graphs of bounded vertex deletion distance to a disjoint union of isolated vertices and edges.
Our main result is that the problem is \FPT{} when parameterized by the vertex deletion distance.
Surprisingly, however, we conclude a notable difference on the running time compared to the edge deletion case even on such a relaxed variation:
a double-exponential dependency on the vertex deletion distance is highly unavoidable.
Therefore, despite the fact that both problems are \FPT, the vertex deletion distance parameterization for the {\sc ${\cal H}$-Square Root} problem requires substantial effort.
More formally, we are interested in the following problem.

\problemdef
	{\dpqroot}
	{A graph $G$ and nonnegative integers $p,q,k$ such that $p+2q+k=|V(G)|$.}
	{Decide whether there is a square root $H$ of $G$ such that $H - S$ is a graph isomorphic to $pK_1 + qK_2$, for a set $S$ on $k$ vertices.}

\noindent	
Note that when $q=0$,
the problem asks whether $G$ has a square root with a vertex cover of size (at most) $k$ and we refer to the problem as \sroot. If $p=0$, we obtain \textsc{Distance-$k$-to-Matching Square Root}.
Observe also that, given an algorithm solving \dpqroot, then by testing all possible values of $p$ and $q$ such that $p+2q=|V(G)|-k$,
 we can solve the \textsc{Distance-$k$-to-Degree-One Square Root} problem, whose task is to decide whether there is a square root $H$ such that the maximum degree of $H - S$ is at most one for a set $S$ on $k$ vertices. Note that a set of vertices $X$ inducing a graph of maximum degree one is known as a \emph{dissociation} set and the maximum size of a dissociation set is called the \emph{dissociation} number (see, e.g., \cite{YANNAKAKIS81}).  Thus, the task of  \textsc{Distance-$k$-to-Degree-One Square Root} is to find a square root $H$ with the dissociation number at least $|V(G)|-k$.


We show that \dpqroot{} can be solved in $2^{2^{\Oh(k)}}\cdot n^{\Oh(1)}$  time, that is, the problem is \FPT{} when parameterized by $k$, the size of the deletion set.
We complement this result by showing that the running time of our algorithm is asymptotically optimal in the sense that \sroot, i.e., the special case of \dpqroot{}
when $q=0$,   cannot be solved in  $2^{2^{o(k)}}\cdot n^{\Oh(1)}$ time unless \emph{Exponential Time Hypothesis} (\emph{ETH}) of Impagliazzo, Paturi and Zane~\cite{ImpagliazzoP01,ImpagliazzoPZ01} fails (see also~\cite{CyganFKLMPPS15} for an introduction to the algorithmic lower bounds based on ETH).
We also prove that \sroot{} does not admit a kernel of subexponential in $k$ size unless $\P=\NP$.

Motivated by the above results, we further investigate the complexity of the {\sc ${\cal H}$-Square Root} problem when $\mathcal{H}$ is the class of graphs of bounded deletion distance to a specific graph class. We show that the problem of testing whether a given graph has a square root of bounded deletion distance to a clique is also \FPT{} parameterized by the size of the deletion set.

\section{Preliminaries}

\paragraph{Graphs.}
All graphs considered here are finite undirected graphs without loops and multiple edges. We refer to the textbook by Bondy and Murty~\cite{Bondy} for any undefined graph terminology.
We denote the vertex set of $G$ by $V(G)$ and the edge set by $E(G)$.
We use $n$ to denote the number of vertices of a graph and use $m$ for the number of edges (if this does not create confusion). Given $x\in V(G)$, we denote by $N_G(x)$ the neighborhood of $x$. The closed neighborhood of $x$, denoted by $N_G[x]$, is defined as $N_H(x)\cup\{x\}$.  For a set $X\subset V(G)$, $N_G(X)$ denotes the set of vertices in $V(G)\setminus X$ that have at least one neighbor in $X$. Analogously, $N_G[X]=N_G(X)\cup X$. The \emph{distance} between a pair of vertices $u,v\in V(G)$ is the number of edges of a shortest path between them in~$G$. We denote by $N^2_G(u)$ the set of vertices of $G$ that are at distance \emph{exactly} two from~$u$, and $N_G^2[u]$ is the set of vertices at distance at most two from $u$.
Given $S\subseteq V(G)$, we denote by $G-S$ the graph obtained from $G$ by the removal of the vertices of $S$. If $S=\{u\}$, we also write $G-u$. The \emph{subgraph induced by $S$} is denoted by $G[S]$, and has $S$ as its vertex set and $\{uv~|~u,v\in S\mbox{ and }uv\in E(G)\}$ as its edge set. A \emph{clique} is a set $K\subseteq V(G)$ such that $G[K]$ is a complete graph. An \emph{independent set} is a set $I\subseteq V(G)$ such that $G[I]$ has no edges. A \emph{vertex cover} of $G$ is a set $S\subseteq V(G)$ such that $V(G)\setminus S$ is an independent set. A graph is \emph{bipartite} if its vertex set can be partitioned into two independent sets, say~$A$ and~$B$, and is \emph{complete bipartite} if it is bipartite and every vertex of $A$ is adjacent to every vertex of $B$. A \emph{biclique} in a graph $G$ is a set $B\subseteq V(G)$ such that $G[B]$ is a complete bipartite graph. A \emph{matching} in $G$ is a set of edges having no common endpoint.
We denote by $K_r$ the complete graph on $r$ vertices. Given two graphs $G$ and $G'$, we denote by $G+G'$ the disjoint union of them. For a positive integer $p$, $pG$ denotes the disjoint union of $p$ copies of $G$.

The \emph{square} of a graph $H$ is the graph $G=H^2$ such that $V(G)=V(H)$ and every two distinct vertices $u$ and $v$ are adjacent in $G$ if and only if they are at distance at most two in $H$. If $G=H^2$, then $H$ is a \emph{square root} of $G$.

Two vertices $u,v$ are said to be \emph{true twins} if $N_G[u]=N_G[v]$. A \emph{true twin class} of $G$ is a maximal set of vertices that are pairwise true twins. Note that the set of true twin classes of $G$ constitutes a partition of $V(G)$.
Let $\mathcal{T}=\{T_1,\ldots,T_r\}$. We define the \emph{prime-twin} graph $\mathcal{G}$ of $G$ as the graph with the vertex set $\mathcal{T}$ such that two distinct vertices $T_i$ and $T_j$ of $\mathcal{G}$ are adjacent if and only if $uv\in E(G)$ for $u\in T_i$ and $v\in T_j$.

\paragraph{Parameterized Complexity.}
We refer to the recent book of  \cite{CyganFKLMPPS15} for an introduction to Parameterized Complexity. Here we only state some basic definitions that are crucial for understanding.
In a \emph{parameterized problem}, each instance is supplied with an integer \emph{parameter} $k$, that is, each instance can be written as a pair $(I,k)$.
A parameterized problem is said to be \emph{fixed-parameter tractable} ({\FPT}) if it can be solved in time $f(k)\cdot |I|^{\Oh(1)}$ for some computable function~$f$.
A \emph{kernelization} for a parameterized problem is a polynomial time algorithm that maps each instance $(I,k)$ of a parameterized problem  to an instance $(I',k')$ of the same problem such that
(i) $(I,k)$ is a \yes-instance if and only if $(I',k')$ is a \yes-instance, and
(ii) $|I'|+k'$ is bounded by~$f(k)$ for some computable function~$f$.
The output $(I',k')$ is called a \emph{kernel}. The function~$f$ is said to be the \emph{size} of the kernel.

\paragraph{Integer Programming.}
We will use integer linear programming as a subroutine in the proof of our main result. In particular, we translate part of our problem as an instance of the following problem.

\problemdef
	{$p$-Variable Integer Linear Programming Feasibility}
	{An $m\times p$ matrix $A$ over $\mathbb{Z}$ and a vector $b\in \mathbb{Z}^{m}$.}
	{Decide whether there is a vector $x\in \mathbb{Z}^{p}$ such that $Ax\leq b$.}

Lenstra~\cite{Lenstra} and Kannan~\cite{Kannan} showed that the above problem is \FPT{} parameterized by $p$, while Frank and Tardos~\cite{FrankTardos} showed that this algorithm can be made to run also in polynomial space. We will make use of these results, that we formally state next.	

\begin{theorem}[\cite{FrankTardos,Kannan,Lenstra}]\label{thm:ILP}
{\sc $p$-Variable Integer Linear Programming Feasibility} can be solved using $\Oh(p^{2.5p+o(p)}\cdot L)$ arithmetic operations and space polynomial in $L$, where $L$ is the number of bits in the input.
\end{theorem}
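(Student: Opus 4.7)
The plan is to prove the theorem via Lenstra's classical approach to integer programming in fixed dimension, refined by Kannan for the sharper bound and complemented by the Frank--Tardos preprocessing for the complexity in $L$. The key idea is a recursion on the dimension $p$: given the polyhedron $P=\{x\in\mathbb{R}^p : Ax\leq b\}$, either certify directly that $P\cap\mathbb{Z}^p\neq\emptyset$, or find a bounded family of parallel integer hyperplanes covering all of $P\cap\mathbb{Z}^p$ and recurse on each.

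First I would use John's theorem (or the ellipsoid method) to sandwich $P$ between concentric ellipsoids $E_1\subseteq P\subseteq E_2$ whose radius ratio depends only on $p$, and apply a linear change of coordinates turning these into approximate balls. Under this change of coordinates, $\mathbb{Z}^p$ becomes a general lattice $L$, for which I would compute an LLL-reduced basis. The central tool is the \emph{flatness theorem}: there is a function $f(p)$ such that for every convex body $K\subseteq\mathbb{R}^p$, either $K$ contains an integer point, or there exists a nonzero $d\in\mathbb{Z}^p$ with $\max_{x\in K}d^\top x-\min_{x\in K}d^\top x\leq f(p)$. Using the reduced basis, I can test the ``round enough'' case (certifying feasibility by rounding the ellipsoid center) and otherwise extract such a flat direction~$d$; then $P\cap\mathbb{Z}^p$ lies on at most $f(p)+1$ integer hyperplanes $d^\top x=t$, each yielding an integer programming instance of dimension $p-1$ to be solved recursively. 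Unrolling produces a recursion tree with $f(p)^{\Oh(p)}$ leaves, and Kannan's sharpened flatness bound together with a careful per-node cost analysis deliver the total of $\Oh(p^{2.5p+o(p)}\cdot L)$ arithmetic operations.

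To obtain polynomial space and the \emph{linear} dependence on $L$, I would invoke Frank--Tardos preprocessing based on simultaneous Diophantine approximation: in polynomial time, each row of $A$ and the corresponding entry of $b$ can be replaced by an integer vector whose bit length depends polynomially on $p$ alone, while preserving feasibility of $Ax\leq b$ over $\mathbb{Z}^p$. After this reduction, all subsequent arithmetic in the Lenstra--Kannan recursion operates on numbers of size depending only on $p$, which simultaneously keeps the working space polynomial in $L$ and concentrates the $L$ factor into the linear term of the running time.

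The main obstacle is establishing and algorithmically exploiting the flatness theorem with a constant sharp enough to yield the $p^{2.5p+o(p)}$ exponent: the geometric proof for general convex bodies requires nontrivial lattice-point estimates, and Kannan's algorithmic version demands careful control of the number of hyperplanes generated at each recursive call, as well as of the bit-size blowup of the lattices passed down the recursion. Bookkeeping this recursion together with the Frank--Tardos preprocessing so that the final bound is $\Oh(p^{2.5p+o(p)}\cdot L)$ arithmetic operations in polynomial space is the chief technical burden, but once the flatness theorem is in place the recursive structure itself is clean.
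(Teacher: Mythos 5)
The statement you were asked to prove is not proved in the paper at all: it is Theorem~\ref{thm:ILP}, which the authors simply cite from the works of Lenstra, Kannan, and Frank--Tardos and then invoke as a black box in the proof of Theorem~\ref{thm:mainfpt}. There is therefore no ``paper's own proof'' to compare against.

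Judged on its own terms as a summary of the cited literature, your sketch is a fair account of the standard Lenstra--Kannan approach: sandwich the polyhedron between ellipsoids, reduce the lattice (LLL), either round to an integer point or extract a flat direction via the flatness theorem, branch over the $f(p)+1$ hyperplanes orthogonal to that direction, and recurse in dimension $p-1$; Frank--Tardos preprocessing via simultaneous Diophantine approximation then bounds the encoding length of all numbers by a polynomial in $p$, which is what makes the dependence on $L$ linear and the working space polynomial. Two caveats worth flagging. First, the exponent $p^{2.5p+o(p)}$ is not obtained by plugging a flatness bound naively into the recursion tree; Kannan's analysis requires a more careful recursion (in particular he works with a different projection scheme and tighter control on the number of subproblems spawned at each level) than the vanilla ``$f(p)^{O(p)}$ leaves'' count suggests, and a direct count along the lines you describe would give a weaker exponent. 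Second, the polynomial-space claim depends on performing Frank--Tardos preprocessing \emph{before} the recursion rather than interleaving it, and on noting that the recursion depth is only $p$; you gesture at this but it deserves to be stated explicitly, since the straightforward Lenstra algorithm applied to unreduced data can accumulate superpolynomial intermediate bit-lengths. With those two points sharpened, your outline matches the structure of the proofs in the cited references.
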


\section{\dpqroot}\label{sec:main}

In this section we give an \FPT{} algorithm for the \dpqroot{} problem, parameterized by $k$. In the remainder of this section, we use $(G,p,q,k)$ to denote an instance of the problem.
Suppose that $(G,p,q,k)$ is a \yes-instance and $H$ is a square root of $G$ such that there is $S\subseteq V(G)$ of size $k$ and $H-S$ is isomorphic to $pK_1+qK_2$.
We say that $S$ is a \emph{modulator},
the $p$ vertices of $H - S$ that belong to $pK_1$ are called \emph{$S$-isolated} vertices
and the $q$ edges that belong to $qK_2$ are called \emph{$S$-matching} edges.  Slightly abusing notation, we also use these notions when $H$ is not necessarily a square root of $G$ but any graph such that $H-S$ has maximum degree one.

\subsection{Structural lemmas}
\label{sec:fptdonesroot}

We start by defining the following two equivalence relations on the set of ordered pairs of vertices of $G$.
Two pairs of adjacent vertices $(x,y)$ and $(z,w)$ are called \emph{matched twins}, denoted by $(x,y)\sim_{\text{mt}}(z,w)$, if the following conditions hold:
\begin{itemize}
\item $N_G[x] \setminus \{y\} = N_G[z] \setminus \{w\}$, and
\item $N_G[y] \setminus \{x\} = N_G[w] \setminus \{z\}$.
\end{itemize}
A pair of vertices $(x,y)$ is called \emph{comparable} if
$N_G[x] \subseteq N_G[y]$.
Two comparable pairs of vertices $(x,y)$ and $(z,w)$ are \emph{nested twins}, denoted by $(x,y) \sim_{\text{nt}} (z,w)$, if the following conditions hold:
\begin{itemize}
\item $N_G(x) \setminus \{y\} = N_G(z) \setminus \{w\}$, and
\item $N_G[y] \setminus \{x\} = N_G[w] \setminus \{z\}$.
\end{itemize}

We use the following properties of matched and nested twins.

\begin{lemma}\label{lem:alltwins}
Let $(x,y)$ and $(z,w)$ be two distinct pairs of adjacent vertices (resp.\ comparable pairs) of $G$ that are matched twins (resp.\ nested twins).
Then, the following holds:
\begin{enumerate}
\item $\{x,y\} \cap \{z,w\} = \emptyset$,

\item  $xw, zy \notin E(G)$,

\item  $yw \in E(G)$,

\item  if $(x,y)\sim_{\text{mt}}(z,w)$ then $xz \in E(G)$,

\item  if $(x,y)\sim_{\text{nt}}(z,w)$ then $xz \notin E(G)$,

\item $G-\{x,y\}$ and $G-\{z,w\}$ are isomorphic.
\end{enumerate}
\end{lemma}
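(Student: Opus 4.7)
The plan is to establish the six items in the order they are stated, since later items lean on earlier ones. The whole argument is a chase through the two set equalities that define each relation, and the only nontrivial bookkeeping is property~1.

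For property~1 (disjointness of the two pairs), I would argue by contradiction, case by case on which of the coincidences $x{=}z$, $x{=}w$, $y{=}z$, $y{=}w$ occurs. In each case I place a carefully chosen element into one side of a defining equation and observe that it is explicitly excluded from the other. For instance, in the matched case, if $x=z$ and $y\ne w$, then $y\in N_G[x]\setminus\{w\}$ (using that $xy\in E$) but $y\notin N_G[x]\setminus\{y\}$, contradicting the first equation; the cross coincidence $x=w$ immediately violates $N_G[x]\setminus\{y\}=N_G[z]\setminus\{x\}$ by membership of $x$ itself, and $y=z$ and $y=w$ are handled symmetrically via the second equation. The nested case is essentially identical, using the open-neighborhood equation for the same-letter coincidences (where the argument starts from the edge $xy\in E$, which follows from comparability plus distinctness of $x$ and $y$) and the closed-neighborhood equation for the cross ones.

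For properties 2, 3, 4 and 5 (the edges and non-edges among the four now-distinct vertices), one uniform template works: assume the putative adjacency or non-adjacency, put the relevant vertex on one side of a defining equation, and read off a contradiction or a consequence from the other side. Concretely, $xw\notin E$ (property~2) because otherwise $w\in N_G[x]\setminus\{y\}$ (using $w\ne y$ from property~1), which would force $w\in N_G[z]\setminus\{w\}$, absurd; and $yw\in E$ (property~3) by tracing $y\in N_G[y]\setminus\{x\}$ into $N_G[w]\setminus\{z\}$ and invoking property~1 to rule out $y=w$. The matched-versus-nested distinction finally matters in items 4 and 5: placing $x$ into the first equation yields $x\in N_G[z]$ in the matched case (so $xz\in E$), whereas in the nested case the equation involves open neighborhoods, and $xz\in E$ would put $z$ into $N_G(z)\setminus\{w\}$, which is impossible.

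For property~6 I would exhibit the explicit swap bijection $\varphi:V(G)\setminus\{x,y\}\to V(G)\setminus\{z,w\}$ defined by $\varphi(z)=x$, $\varphi(w)=y$, and $\varphi(v)=v$ otherwise; property~1 guarantees that $\varphi$ is well-defined. Preservation of adjacency splits into four cases depending on whether each endpoint is fixed or swapped. Edges between fixed vertices are preserved trivially. For an edge of the form $zv$ with $v\notin\{x,y,z,w\}$ one uses the first defining equation to transfer $zv\in E$ to $xv\in E$, and edges $wv$ are handled identically by the second equation; in the nested case one needs the extra remark that the open-neighborhood equation still characterizes adjacency to $v$ because $v\notin\{x,z\}$. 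Finally, the edge $zw$ is sent to $xy$, and both exist (both pairs are adjacent in the matched case, and in the nested case $N_G[x]\subseteq N_G[y]$ together with $x\ne y$ forces $xy\in E$, similarly for $zw$).

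The main obstacle I expect is simply the case analysis in property~1: the two defining equations are not symmetric enough to collapse the four coincidences into one, so each has to be ruled out separately. Once property~1 is in hand, the remaining items are essentially one-line consequences of the set equalities, and property~6 is just the routine verification that the swap $\varphi$ respects adjacency.
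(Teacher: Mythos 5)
Your proposal is correct and follows essentially the same route as the paper: case analysis on the four vertex coincidences for item~(i), membership-chasing through the two defining set equalities for items~(ii)--(v), and the explicit swap map for item~(vi). The only cosmetic differences are that the paper derives (ii) from the second defining equation rather than the first, and phrases (vi) as an automorphism $\alpha$ of all of $G$ (with $\alpha(x)=z$, $\alpha(y)=w$, etc.) rather than your restricted bijection $V(G)\setminus\{x,y\}\to V(G)\setminus\{z,w\}$; these are equivalent.
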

\begin{proof}
For (i), we show that the end-vertices of both pairs are distinct.
It is not difficult to see that $(x,y) \nematch (y,x)$ and $(x,y) \nenest (y,x)$, since $x\in N_G[x] \setminus \{y\}$ and $x \notin N_G[y]\setminus \{x\}$. Assume, for the sake of contradiction, that the two pairs share one end-vertex.

\begin{itemize}
\item First, we show (i) for $\sim_{\text{mt}}$. Let  $(x,y) \sim_{\text{mt}} (z,w)$.
Suppose that $y=w$.  Then $x\notin N_G[y]\setminus \{x\}$ but $x\in N_G[w]\setminus \{z\}$, that is, $N_G[y]\setminus \{x\}\neq N_G[w]\setminus \{z\}$ contradicting  $(x,y) \sim_{\text{mt}} (z,w)$.
Assume that $y=z$. Then $z\in N_G[y]\setminus \{w\}$ but $z=y\notin N_G[x]\setminus\{y\}$; a contradiction.
The cases $x=z$ and $x=w$ are completely symmetric to the cases considered above.

\item Now we prove (i) for $ \sim_{\text{nt}}$. Let  $(x,y) \sim_{\text{nt}} (z,w)$.
Suppose that $y=w$. Then $x\notin N_G[y]\setminus \{x\}$ but $x\in N_G[w]\setminus \{z\}$, that is, $N_G[y]\setminus \{x\}\neq N_G[w]\setminus \{z\}$; a contradiction to $(x,y) \sim_{\text{nt}} (z,w)$.
Let $x=z$. Then $y\notin N_G(x)\setminus \{y\}$ but $y\in N_G[z]\setminus \{w\}$, and we get that  $N_G(x)\setminus \{y\}\neq N_G(z)\setminus \{w\}$, leading again to a contradiction.
Assume that  $y=z$.  Then $y=z\notin N_G[w]\setminus\{z\}$ but $y\in N_G[y]\setminus \{x\}$ and we again obtain a contradiction. The case $x=w$ is symmetric.
\end{itemize}
This completes the proof of (i). To show the remaining claims,
observe that $N_G[y] \setminus \{x\} = N_{G}[w]\setminus\{z\}$ holds in both relations.

For (ii), note that if $xw\in E(G)$, then $x\in N_G[w]\setminus\{z\}$ but $x\notin N_G[y]\setminus\{x\}$, a contradiction. So $xw\notin E(G)$. The same follows by a symmetric argument for the edge $yz$.

For (iii), note that if $yw\notin E(G)$, then $w\notin N_G[y]\setminus\{x\}$, but $w\in N_G[w]\setminus\{z\}$, a contradiction.

To show (iv), observe that if $xz\notin E(G)$, then $x\in N_G[x]\setminus\{y\}$, while $x\notin N_G[z]\setminus\{w\}$, a contradiction.

For (v), if $xz\in E(G)$, then $z\in N_G(x)\setminus\{y\}$, but $z\notin N_G(z)\setminus\{w\}$, a contradiction.

To see (vi), notice that $\{x,y\} \cap \{z,w\} = \emptyset$ by (i).  Consider $\alpha\colon V(G)\rightarrow V(G)$ such that $\alpha(x)=z$, $\alpha(y)=w$, $\alpha(z)=x$, $\alpha(w)=y$ and $\alpha(v)=v$ for $v\in V(G)\setminus\{x,y,z,w\}$. It is straightforward to see that $\alpha$ is an automorphism of $G$ by the definition of $\enest$ and $\ematch$ and the properties (i) and (ii). Hence, $G-\{x,y\}$ and $G-\{z,w\}$ are isomorphic.
\end{proof}

In particular, the properties above allow us to classify pairs of vertices with respect to $\sim_{\text{mt}}$ and $\sim_{\text{nt}}$.

\begin{observation}\label{lem:equiv}
The relations $\sim_{\text{mt}}$ and $\sim_{\text{nt}}$ are equivalence relations on pairs of vertices and comparable pairs of vertices, respectively.
\end{observation}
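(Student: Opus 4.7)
The plan is to verify the three defining properties of an equivalence relation, namely reflexivity, symmetry, and transitivity, for both $\ematch$ and $\enest$ directly from their definitions. The conditions appearing in the definitions are all equalities between sets of the form $N_G[\cdot]\setminus\{\cdot\}$ or $N_G(\cdot)\setminus\{\cdot\}$, and these behave well under chaining, so there is no real combinatorial obstruction to overcome; the proof is essentially bookkeeping.

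First I would handle reflexivity. For any adjacent pair $(x,y)$, we clearly have $N_G[x]\setminus\{y\} = N_G[x]\setminus\{y\}$ and $N_G[y]\setminus\{x\} = N_G[y]\setminus\{x\}$, so $(x,y)\ematch(x,y)$. For reflexivity of $\enest$ on a comparable pair $(x,y)$, the same trivial equalities with $N_G(x)\setminus\{y\}$ in place of $N_G[x]\setminus\{y\}$ give $(x,y)\enest(x,y)$. Symmetry is similarly immediate: the two conditions in each definition are symmetric under the simultaneous swap of $(x,y)$ with $(z,w)$, so $(x,y)\ematch(z,w)$ implies $(z,w)\ematch(x,y)$ and analogously for $\enest$.

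For transitivity, suppose $(x,y)\ematch(z,w)$ and $(z,w)\ematch(a,b)$. Then
\[
N_G[x]\setminus\{y\} \;=\; N_G[z]\setminus\{w\} \;=\; N_G[a]\setminus\{b\},
\]
and similarly $N_G[y]\setminus\{x\} = N_G[w]\setminus\{z\} = N_G[b]\setminus\{a\}$, which yields $(x,y)\ematch(a,b)$. The argument for $\enest$ is identical, using the closed neighborhood of the second coordinate and the open neighborhood of the first. Here one should only check that $(a,b)$ is itself comparable, but this is built into the definition of $\enest$ (the relation is declared only on comparable pairs), so nothing extra needs to be shown.

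I do not expect a main obstacle: the properties of Lemma~\ref{lem:alltwins} are not used at all in this observation, since the claim is a purely formal consequence of the fact that set equality is an equivalence relation. The only thing to be careful about is simply to write the correct neighborhoods (open versus closed, left versus right coordinate) in each of the two relations and to not confuse them.
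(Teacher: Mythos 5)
Your proof is correct and follows essentially the same approach as the paper: reflexivity and symmetry are immediate from the definitions, and transitivity follows by chaining the set equalities of the form $N_G[\cdot]\setminus\{\cdot\}$ (resp.\ $N_G(\cdot)\setminus\{\cdot\}$). You are slightly more explicit than the paper in spelling out reflexivity and symmetry, and your side remark that comparability of the pairs is built into the domain of $\enest$ is a correct observation that the paper leaves implicit.
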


\begin{proof}
It is clear that $\ematch$ (resp.\ $\enest$) are reflexive and symmetric on pairs of vertices (resp.\ comparable vertices).
Let $(x_1,y_1)$, $(x_2,y_2)$ and $(x_3,y_3)$ be pairs of vertices.
If $N_G[x_1]\setminus\{y_1\}=N_G[x_2]\setminus\{y_2\}$ and $N_G[x_2]\setminus\{y_2\}=N_G[x_3]\setminus\{y_3\}$, then
$N_G[x_1]\setminus\{y_1\}=N_G[x_3]\setminus\{y_3\}$. Also if
$N_G(x_1)\setminus\{y_1\}=N_G(x_2)\setminus\{y_2\}$ and $N_G(x_2)\setminus\{y_2\}=N_G(x_3)\setminus\{y_3\}$, then
$N_G(x_1)\setminus\{y_1\}=N_G(x_3)\setminus\{y_3\}$. This immediately implies that $\sim_{\text{mt}}$ and $\sim_{\text{nt}}$ are transitive, as well.
%
%
%
%
%
\end{proof}

\begin{figure}[ht]
\centering
\scalebox{0.8}{
\input{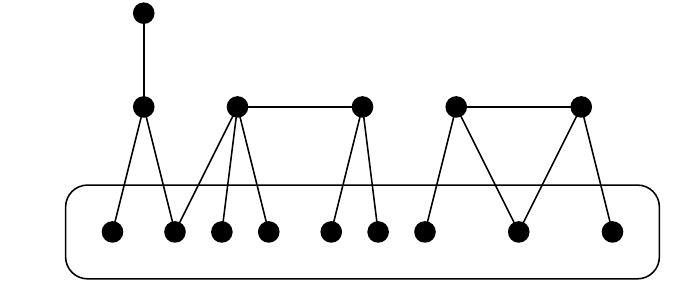_t}}
\caption{Types of edges of $H-S$.}
\label{fig:typesofesdges}
\end{figure}

Let $H$ be a square root of a connected graph $G$ with at least three vertices, such that $H$ is at distance $k$ from $pK_1+qK_2$, and let $S$ be a modulator. Note that $S\neq\emptyset$, because $G$ is connected and $|V(G)|\geq 3$.
Then an $S$-matching edge $ab$ of $H$ satisfies exactly one of the following conditions:
\begin{itemize}
\item[1.] $N_H(a)\cap S=\emptyset$ and $N_H(b)\cap S\neq\emptyset$, 
\item[2.] $N_H(a)\cap S, N_H(b)\cap S\neq\emptyset$ and $N_H(a)\cap N_H(b)\cap S =\emptyset$,
\item[3.] $N_H(a)\cap S, N_H(b)\cap S\neq\emptyset$ and $N_H(a)\cap N_H(b)\cap S \neq\emptyset$.
\end{itemize}
We refer to them as type 1, 2 and~3 edges, respectively (see Figure~\ref{fig:typesofesdges}). We use the same notation for every graph $F$ that has a set of vertices $S$ such that $F-S$ has maximum degree at most one.

In the following three lemmas, we show the properties of the $S$-matching edges of types~1, 2 and 3 respectively that are crucial for our algorithm. We point out that even though some of the properties presented may   be  redundant, we state them in the lemmas for clarity of the explanations.

\begin{lemma}\label{lem:bigQset}
Let $H$ be a square root of a connected  graph $G$ with at least three vertices such that $H-S$ is isomorphic to $pK_1+qK_2$ for $S\subseteq V(G)$. If $a_1b_1$ and $a_2b_2$ are two type 1 distinct edges such that $N_H(b_1)\cap S=N_H(b_2)\cap S\neq\emptyset$, then the following holds:
\begin{enumerate}
\item $(a_1,b_1)$ and $(a_2,b_2)$ are comparable pairs,
\item $(a_1,b_1) \enest (a_2,b_2)$,
\item $(a_1,b_1) \nematch (a_2,b_2)$.
\end{enumerate}
\end{lemma}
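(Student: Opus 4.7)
The plan is to compute the relevant $G$-neighborhoods of $a_i$ and $b_i$ explicitly using the fact that $G=H^2$ together with the structural restrictions imposed by the type 1 condition. Set $Q:=N_H(b_1)\cap S=N_H(b_2)\cap S$, which is nonempty by hypothesis. Because $a_1b_1$ and $a_2b_2$ are type 1 edges, each $a_i$ satisfies $N_H(a_i)\cap S=\emptyset$, and since $H-S$ is $pK_1+qK_2$ with $a_ib_i$ a matching edge, the only $H$-neighbor of $a_i$ is $b_i$. Likewise, the $H$-neighbors of $b_i$ are exactly $\{a_i\}\cup Q$. Note that $a_1,a_2,b_1,b_2$ are pairwise distinct (different matching edges) and none lies in $S$, while $Q\subseteq S$.

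From this I can read off the $G$-neighborhoods using that $G=H^2$. First,
\[
N_G[a_i]=N_H[a_i]\cup N_H(b_i)=\{a_i,b_i\}\cup Q,
\]
since $a_i$ has degree one in $H$ with neighbor $b_i$. Second, writing $N_H(Q)=\bigcup_{s\in Q}N_H(s)$,
\[
N_G[b_i]=N_H[b_i]\cup N_H(a_i)\cup N_H(Q)=\{a_i,b_i\}\cup Q\cup N_H(Q).
\]
The key observation is then that $N_H(Q)$ does \emph{not} depend on $i$: in particular $b_1,b_2\in N_H(Q)$ (each $s\in Q$ is adjacent in $H$ to both $b_1$ and $b_2$ by definition of $Q$), while $a_1,a_2\notin N_H(Q)$ (their only $H$-neighbor is $b_i\notin S\supseteq Q$).

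From the first formula, $N_G[a_i]=\{a_i,b_i\}\cup Q\subseteq\{a_i,b_i\}\cup Q\cup N_H(Q)=N_G[b_i]$, giving comparability, i.e.\ item~(i). For item~(ii), I compute
\[
N_G(a_i)\setminus\{b_i\}=Q\quad\text{for }i=1,2,
\]
and, using $b_1,b_2\in N_H(Q)$ while $a_1,a_2\notin N_H(Q)$,
\[
N_G[b_i]\setminus\{a_i\}=Q\cup N_H(Q)\quad\text{for }i=1,2,
\]
so both defining equalities of $\enest$ are met. For item~(iii), observe that
\[
N_G[a_1]\setminus\{b_1\}=\{a_1\}\cup Q\neq\{a_2\}\cup Q=N_G[a_2]\setminus\{b_2\},
\]
because $a_1\neq a_2$ and neither $a_1$ nor $a_2$ belongs to $Q\subseteq S$. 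Hence $(a_1,b_1)\nematch(a_2,b_2)$.

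The only real subtlety is checking that the ``extra'' vertices that could conceivably spoil the two neighborhood equalities do not appear. Concretely, one must rule out that $a_2$ lies in $N_G[b_1]$ (or symmetrically $a_1\in N_G[b_2]$); this is exactly where the type 1 hypothesis is used, since the unique $H$-neighbor of $a_2$ is $b_2\notin S$, so any $H$-path from $a_2$ to $b_1$ must use $a_2$–$b_2$–$s$–$b_1$ for some $s\in Q$ and thus has length at least three. Once this point is secured, items~(i)--(iii) follow directly from the neighborhood computations above.
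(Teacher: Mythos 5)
Your proof is correct and follows essentially the same computational route as the paper: you explicitly write out the $G$-neighborhoods from $G=H^2$ using that $N_H(a_i)=\{b_i\}$ and $N_H(b_i)=\{a_i\}\cup Q$, and read off (i) and (ii) directly. The only small deviation is in (iii): you derive $\nematch$ directly from $N_G[a_1]\setminus\{b_1\}=\{a_1\}\cup Q\neq\{a_2\}\cup Q=N_G[a_2]\setminus\{b_2\}$, whereas the paper derives it from (ii) together with Lemma~\ref{lem:alltwins}(iv) and (v) (nested twins force $a_1a_2\notin E(G)$ while matched twins would force $a_1a_2\in E(G)$); both are valid and equally short. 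One stylistic note: the ``subtlety'' flagged in your final paragraph (that $a_2\notin N_G[b_1]$) is already forced by your exact formula $N_G[b_1]=\{a_1,b_1\}\cup Q\cup N_H(Q)$ together with $a_2\notin S$ and $N_H(a_2)\cap S=\emptyset$, so that paragraph is reassurance rather than a missing step.
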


\begin{proof}
Let $A=N_H(b_1)\cap S=N_H(b_2)\cap S$.
Since  $(a_1,b_1)$ is a type 1 edge, we have that $N_H(a_1)=\{b_1\}$. Thus, $N_G[a_1]=A\cup\{a_1,b_1\}\subseteq N_H[b_1]\subseteq N_G[b_1]$.
 The same holds for $(a_2,b_2)$. Hence, the pairs are comparable  and (i) is proved.

For (ii), note that since $N_H(b_1)\cap S=N_H(b_2)\cap S=A$, then $N_G[b_1]\setminus\{a_1\}=N_H[A]=N_G[b_2]\setminus\{a_2\}$.
Moreover, since $N_H(a_1)=\{b_1\}$ and $N_H(a_2)=\{b_2\}$, we have that $N_G(a_1)\setminus\{b_1\}=A=N_G(a_2)\setminus\{b_2\}$. This shows that $(a_1,b_1) \enest (a_2,b_2)$.

Finally, for (iii), it suffices to notice that by Lemma~\ref{lem:alltwins}(v), we have that $a_1a_2\notin E(G)$ and it should be $a_1a_2\in E(G)$ if $(a_1,b_1) \ematch (a_2,b_2)$ by Lemma~\ref{lem:alltwins}(iv).
\end{proof}

\begin{lemma}\label{lem:bigSset}
Let $H$ be a square root of a  connected  graph $G$ with at least three vertices such that $H-S$ is isomorphic to $pK_1+qK_2$ for $S\subseteq V(G)$. If $a_1b_1$ and $a_2b_2$ are two distinct type 2 edges such that $N_H(a_1)\cap S=N_H(a_2)\cap S$ and $N_H(b_1)\cap S=N_H(b_2)\cap S$, then the following holds:
\begin{enumerate}
\item $(a_1,b_1) \ematch (a_2,b_2)$,
\item $(a_1,b_1) \nenest (a_2,b_2)$.
\end{enumerate}
\end{lemma}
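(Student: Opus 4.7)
The plan is to verify the matched-twin identities on closed neighborhoods directly from the structure of $H$, and then derive the nested-twins failure as a quick consequence via Lemma~\ref{lem:alltwins}(v). To set up, first observe that since $a_1b_1$ and $a_2b_2$ are distinct edges of the matching $qK_2$ contained in $H-S$, they must be vertex-disjoint, i.e. $\{a_1,b_1\}\cap\{a_2,b_2\}=\emptyset$. Write $A:=N_H(a_1)\cap S=N_H(a_2)\cap S$ and $B:=N_H(b_1)\cap S=N_H(b_2)\cap S$. The type 2 condition yields $A,B\neq\emptyset$ and $A\cap B=\emptyset$, and since $H-S$ contributes only the edge $a_ib_i$ at each $a_i,b_i$, we have $N_H(a_i)=A\cup\{b_i\}$ and $N_H(b_i)=B\cup\{a_i\}$.

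For (i), the core step is to compute $N_G[a_i]=N_H^2[a_i]$ as the union of $a_i$, its $H$-neighbors, and $H$-neighbors of those. This gives
\[
N_G[a_i]=\{a_i,b_i\}\cup A\cup B\cup \bigcup_{s\in A}N_H(s).
\]
Crucially, $b_i$ is not in $\bigcup_{s\in A}N_H(s)$: were $b_i\in N_H(s)$ for some $s\in A$, we would have $s\in N_H(b_i)\cap S=B$, contradicting $A\cap B=\emptyset$. On the other hand, both $a_1$ and $a_2$ lie in $\bigcup_{s\in A}N_H(s)$ because $A\subseteq N_H(a_j)$ for $j=1,2$. Therefore,
\[
N_G[a_1]\setminus\{b_1\}=A\cup B\cup\bigcup_{s\in A}N_H(s)=N_G[a_2]\setminus\{b_2\}.
\]
The symmetric computation with $A$ and $B$ swapped and $a$, $b$ interchanged gives $N_G[b_1]\setminus\{a_1\}=N_G[b_2]\setminus\{a_2\}$, which completes the verification that $(a_1,b_1)\ematch (a_2,b_2)$.

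For (ii), observe that any $s\in A$ is a common $H$-neighbor of $a_1$ and $a_2$, so $a_1$ and $a_2$ are at distance two in $H$, whence $a_1a_2\in E(G)$. If instead $(a_1,b_1)\enest (a_2,b_2)$ held, then by Lemma~\ref{lem:alltwins}(v) we would have $a_1a_2\notin E(G)$, a contradiction. Hence $(a_1,b_1)\nenest (a_2,b_2)$.

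The only delicate point is the careful bookkeeping in Step~(i), in particular showing that $b_i$ does not sneak into the distance-two expansion $\bigcup_{s\in A}N_H(s)$; this is exactly where the defining disjointness condition $N_H(a_i)\cap N_H(b_i)\cap S=\emptyset$ of type 2 edges is used. Everything else is unpacking the definitions of $\ematch$ and $\enest$ together with the computation of $N_H^2$ from the description of $H-S$ as a matching.
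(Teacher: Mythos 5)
Your proof is correct and follows essentially the same route as the paper: compute $N_G[a_i]=N_H^2[a_i]$ directly from the structure of $H$ around the type-2 edge to get $\ematch$, then use Lemma~\ref{lem:alltwins}(v) to rule out $\enest$. The only cosmetic difference is in (ii), where the paper derives $a_1a_2\in E(G)$ from part~(i) via Lemma~\ref{lem:alltwins}(iv), while you obtain it directly from the common $H$-neighbor in $A$; and in (i) you spell out the bookkeeping showing $b_i\notin N_H[A]$, which the paper leaves implicit.
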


\begin{proof}
Let $A=N_H(a_1)\cap S=N_H(a_2)\cap S$ and $B=N_H(b_1)\cap S=N_H(b_2)\cap S$. Since $a_1b_1$ and $a_2b_2$ are type 2 edges, we have $A\cap B=\emptyset$.


For (i), notice that $N_G[a_1]=N^2_H[a_1]=\{b_1\}\cup B\cup N_H[A]$. Therefore, we have that $N_G[a_1]\setminus\{b_1\}=N_H^2[a_1]\setminus\{b_1\}=B\cup N_H[A]$. By the same arguments,
 $N_G[a_2]\setminus\{b_2\}=B\cup N_H[A]$ and, therefore,  $N_G[a_1]\setminus\{b_1\}=N_G[a_2]\setminus\{b_2\}$. By symmetric arguments, we obtain that $N_G[b_1]\setminus\{a_1\}=N_G[b_2]\setminus\{a_2\}$, which completes the proof  that $(a_1,b_1) \ematch (a_2,b_2)$.

To prove (ii), notice that $a_1a_2\in E(G)$ by Lemma~\ref{lem:alltwins}(iv) and, therefore, $(a_1,b_1) \nenest (a_2,b_2)$ by Lemma~\ref{lem:alltwins}(v).
\end{proof}

\begin{lemma}\label{lem:truetwins}
Let $H$ be a square root of a  connected  graph $G$ with at least three vertices such that $H-S$ is isomorphic to $pK_1+qK_2$ for $S\subseteq V(G)$. If $a_1b_1$ and $a_2b_2$ are two distinct type 3 edges such that $N_H(a_1)\cap S=N_H(a_2)\cap S$ and $N_H(b_1)\cap S=N_H(b_2)\cap S$, then the following holds:
\begin{enumerate}
\item $(a_1,b_1) \nematch (a_2,b_2)$,
\item $(a_1,b_1) \nenest (a_2,b_2)$,
\item $a_1$ and $a_2$ (resp.\ $b_1$ and $b_2$) are true twins in $G$.
\end{enumerate}
\end{lemma}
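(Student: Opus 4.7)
My plan is to mimic the structure of Lemmas~\ref{lem:bigQset} and~\ref{lem:bigSset}: first I set $A = N_H(a_1)\cap S = N_H(a_2)\cap S$ and $B = N_H(b_1)\cap S = N_H(b_2)\cap S$, and exploit the defining property of type~3 edges, namely that $A\cap B \neq \emptyset$. Fixing an arbitrary $s \in A\cap B$, the vertex $s$ is an $H$-neighbor of all four of $a_1,a_2,b_1,b_2$, so in $G = H^2$ each of the pairs $a_1b_2$, $a_2b_1$, $a_1a_2$, $b_1b_2$ is an edge. (Distinctness of the matching edges $a_1b_1$ and $a_2b_2$ ensures that their four endpoints are pairwise distinct.) Claim~(i) then drops out immediately: if $(a_1,b_1)\ematch(a_2,b_2)$ held, Lemma~\ref{lem:alltwins}(ii) would force $a_1b_2 \notin E(G)$, contradicting what we just observed.

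For (iii) I would compute $N_G[a_i]=N_H^2[a_i]$ explicitly. Since $a_i$ has degree at most one in $H-S$, its $H$-neighborhood is exactly $\{b_i\}\cup A$; likewise $N_H(b_i) = \{a_i\}\cup B$. Collecting all vertices at distance at most two from $a_i$ in $H$ gives
\[
N_G[a_i] \;=\; \{a_i,b_i\}\cup A\cup B\cup N_H(A).
\]
The key point is that $s\in A\cap B$ is an $H$-neighbor of each of $a_1,a_2,b_1,b_2$, so all four of these vertices already lie in $N_H(A)$; hence the formula collapses to $N_G[a_i] = A\cup B\cup N_H(A)$, an expression that is the same for $i=1$ and $i=2$. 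Thus $N_G[a_1]=N_G[a_2]$, i.e., $a_1$ and $a_2$ are true twins in $G$; a symmetric argument (swapping the roles of $A$ and $B$) gives the analogous statement for $b_1,b_2$.

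Claim~(ii) is then a short deduction from~(iii): since $a_1\neq a_2$, being true twins forces $a_1a_2\in E(G)$, while $(a_1,b_1)\enest(a_2,b_2)$ would force $a_1a_2\notin E(G)$ by Lemma~\ref{lem:alltwins}(v), a contradiction. The main (and essentially only) care point will be confirming that the explicit formula for $N_G[a_i]$ is genuinely exhaustive and that $a_1,a_2,b_1,b_2$ are indeed absorbed into $N_H(A)$ thanks to the common neighbor in $A\cap B$; once that is in place, the three claims follow cleanly via Lemma~\ref{lem:alltwins}.
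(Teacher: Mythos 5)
Your proposal is correct and follows essentially the same route as the paper: identify $A$, $B$ with $A\cap B\neq\emptyset$, use a common $S$-neighbor to get the extra edges among $\{a_1,a_2,b_1,b_2\}$ and rule out the two twin relations via Lemma~\ref{lem:alltwins}, and compute $N_G[a_i]=N_H^2[a_i]$ to show the true-twin property. The only (cosmetic) difference is that you derive~(ii) from~(iii) via Lemma~\ref{lem:alltwins}(v), while the paper handles~(i) and~(ii) in one stroke from $a_1b_2,b_1a_2\in E(G)$ using Lemma~\ref{lem:alltwins}(ii).
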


\begin{proof}
Let $A=N_H(a_1)\cap S=N_H(a_2)\cap S$ and $B=N_H(b_1)\cap S=N_H(b_2)\cap S$. Since $a_1b_1$ and $a_2b_2$ are type 3 edges, $A\cap B\neq\emptyset$.

For (i) and (ii), it suffices to notice that since $A\cap B\neq\emptyset$, then $a_1b_2,b_1a_2\in E(G)$. By Lemma~\ref{lem:alltwins}(ii), we conclude that $(a_1,b_1) \nematch (a_2,b_2)$ and $(a_1,b_1) \nenest (a_2,b_2)$.

For (iii), observe that $N_G[a_1]=N_H^2[a_1]=N_H[A]\cup\{b_1\}\cup B$ by the definition. Since $A\cap B\neq\emptyset$, we have that $b_1\in N_H[A]$. Hence,
$N_G[a_1]=N_H[A]\cup B$. By the same arguments,  $N_G[a_2]=N_H[A]\cup B$. Then $N_G[a_1]=N_G[a_2]$, that is, $a_1$ and $a_2$ are true twins. Clearly, the same holds  for $b_1$ and $b_2$.
\end{proof}

We also need the following straightforward observation about $S$-isolated vertices.

\begin{observation}\label{obs:isolated}
Let $H$ be a square root of a  connected  graph $G$ with at least three vertices such that $H-S$ is isomorphic to $pK_1+qK_2$ for $S\subseteq V(G)$. Then every two distinct $S$-isolated vertices of $H$ with the same neighbors in $S$ are true twins in $G$.
\end{observation}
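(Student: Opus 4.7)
The plan is to directly unpack the definition of $N_G[u]$ and $N_G[v]$ using $G = H^2$, in the same spirit as the closed-neighborhood computations carried out in Lemmas~\ref{lem:bigSset} and~\ref{lem:truetwins}. Let $u$ and $v$ be two distinct $S$-isolated vertices of $H$ with the same neighbors in $S$. Since $u$ and $v$ are $S$-isolated, they have no neighbors outside $S$ in $H$, so their full $H$-neighborhoods coincide with $N_H(u)\cap S$ and $N_H(v)\cap S$ respectively. Setting $A := N_H(u)\cap S = N_H(v)\cap S$, we therefore have $N_H(u) = N_H(v) = A$.

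Next, I would compute the closed $G$-neighborhoods by appealing to $G=H^2$:
\[
N_G[u] \;=\; N_H^2[u] \;=\; \{u\}\cup N_H(u)\cup N_H(N_H(u)) \;=\; \{u\}\cup A \cup N_H(A),
\]
and symmetrically $N_G[v] = \{v\}\cup A \cup N_H(A)$. The key observation is that both $u$ and $v$ belong to $N_H(A)$: indeed, $u\notin A$ (since $A\subseteq S$ while $u\notin S$), and $u$ has a neighbor in $A$ (in fact, all of $A$ by construction); the same holds for $v$. Consequently $v\in N_G[u]$ and $u\in N_G[v]$, and both closed neighborhoods reduce to $A\cup N_H(A)$. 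This yields $N_G[u]=N_G[v]$, i.e., $u$ and $v$ are true twins in $G$.

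There is essentially no obstacle here; the statement follows by direct unrolling of the definitions and mirrors the arguments used earlier in the section. The only minor point worth checking is the degenerate case $A=\emptyset$: if $A$ were empty, then $u$ would be isolated in $H$ and hence in $G$, contradicting the assumption that $G$ is connected on at least three vertices. Thus the case does not arise, and the proof is complete.
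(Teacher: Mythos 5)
Correct. The paper leaves this observation unproved (it is introduced as a "straightforward observation"), and your argument---unrolling $N_G[\cdot]=N_H^2[\cdot]$, identifying both closed neighborhoods with $A\cup N_H(A)$, and noting $u,v\in N_H(A)$---is exactly the computation the authors have in mind, closely paralleling the proofs of Lemmas~\ref{lem:bigSset} and~\ref{lem:truetwins}; the side remark ruling out $A=\emptyset$ via connectivity is a nice touch, though it also follows that the conclusion would simply fail there, so the hypothesis is genuinely used.
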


The next lemma is used to construct reduction rules that allow to bound the size of equivalence classes of pairs of vertices with respect to $\enest$ and $\ematch$.

\begin{lemma}\label{lem:twonestedpairs}
Let $H$ be a square root of a  connected  graph $G$ with at least three vertices such that $H-S$ is isomorphic to $pK_1+qK_2$ for a modulator $S\subseteq V(G)$ of size $k$.
Let $Q$ be  an equivalence class in the set of pairs of comparable pairs of vertices with respect to the relation $\enest$ (an equivalence class in the set of pairs of adjacent vertices with respect to the relation $\ematch$, respectively).   If $|Q|\geq 2k + 2^{2k} + 1$, then $Q$ contains two pairs $(a_1,b_1)$ and  $(a_2,b_2)$ such that $a_1b_1$ and $a_2b_2$ are $S$-matching edges of type~1 in $H$ satisfying $N_H(b_1)\cap S=N_H(b_2)\cap S\neq \emptyset$ ($S$-matching edges of type~2 in $H$ satisfying $N_H(a_1)\cap S=N_H(a_2)\cap S$ and $N_H(b_1)\cap S=N_H(b_2)\cap S$, respectively).
\end{lemma}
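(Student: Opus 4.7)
The plan is to combine a pigeonhole over modulator endpoints with a pigeonhole over $S$-neighborhoods, and then use the vertex-disjointness of twin pairs to pin down the structure. By Lemma~\ref{lem:alltwins}(i), distinct pairs in $Q$ are vertex-disjoint, so the $2|Q|$ endpoints appearing across pairs of $Q$ are all distinct. Consequently, at most $2k$ pairs of $Q$ can have an endpoint in $S$, since each such endpoint consumes a distinct vertex of the modulator. Hence at least $|Q|-2k \geq 2^{2k}+1$ pairs have both endpoints in $V(H)\setminus S$. For each such pair $(x,y)$, I would record the ordered pair $(A,B) := (N_H(x)\cap S,\; N_H(y)\cap S) \in 2^S\times 2^S$. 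Since there are only $2^{2k}$ possibilities, pigeonhole yields two distinct pairs $(a_1,b_1),(a_2,b_2)\in Q$ with both endpoints outside $S$ that share the same $(A,B)$.

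The bulk of the work, and the main obstacle, is to verify that these pigeonholed pairs are the $S$-matching edges of the claimed type. For the $\ematch$ case, Lemma~\ref{lem:alltwins}(ii) gives $a_1 b_2 \notin E(G)$, so $a_1$ and $b_2$ share no $H$-neighbor; comparing the $H$-neighborhoods of $a_1$ (namely $A$ plus at most one matching partner outside $S$) and of $b_2$ (namely $B$ plus at most one matching partner outside $S$) therefore forces $A\cap B = \emptyset$. By Lemma~\ref{lem:alltwins}(iv), $a_1 a_2 \in E(G)$, so $a_1$ and $a_2$ must share an $H$-neighbor or be adjacent in $H$. If $A$ were empty, then since matching partners are unique ($a_1$ and $a_2$ cannot share the same partner outside $S$), the only remaining possibility would be $a_1 a_2\in E(H)$; but then $a_1,a_2$ would be matched to each other in $H-S$, giving $N_H(a_1)=\{a_2\}$ and thus $N_G[a_1]=N_H^2[a_1]=\{a_1,a_2\}$, so the required edge $a_1 b_1 \in E(G)$ would force $b_1 = a_2$, contradicting the vertex-disjointness established above. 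Hence $A\neq\emptyset$ and, symmetrically, $B\neq\emptyset$. Finally, $A\cap B=\emptyset$ forces $a_i b_i \in E(H)$: otherwise a distance-$2$ witness between $a_i$ and $b_i$ in $H$ would have to lie in $S$ (uniqueness of matching partners again rules out a witness outside $S$), placing a vertex in $A\cap B$. Thus both $a_i b_i$ are type~2 $S$-matching edges with matching $S$-neighborhoods.

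The $\enest$ case is cleaner: Lemma~\ref{lem:alltwins}(v) gives $a_1 a_2 \notin E(G)$, so $a_1$ and $a_2$ have no common $H$-neighbor, which forces $A=\emptyset$ directly. With $A=\emptyset$ and $a_i b_i \in E(G)$ (a comparable pair of distinct vertices is necessarily adjacent in $G$), the edge $a_i b_i$ must in fact lie in $H$, since otherwise a distance-$2$ witness in $H$ would have to be a common $S$-neighbor and thus place a vertex in $A$. Connectivity of $G$ together with $|V(G)|\geq 3$ rules out $B=\emptyset$ (else $\{a_i,b_i\}$ would be a $K_2$ component of $G$), so both edges are type~1 with the same nonempty $N_H(b_i)\cap S$, as required.
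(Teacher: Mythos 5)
Your proof is correct, but it organizes the argument differently from the paper. The shared core is the same: vertex-disjointness of pairs in $Q$ (Lemma~\ref{lem:alltwins}(i)) plus a pigeonhole over the $2^{2k}$ possible values of $(N_H(x)\cap S,\,N_H(y)\cap S)$. The divergence is in how the pairs $(x,y)$ with $x,y\notin S$ but $xy\notin E(H)$ are handled. The paper eliminates them \emph{before} pigeonholing: it shows there are at most $k$ such pairs, because each needs a common $H$-neighbour in $S$ and, by Lemma~\ref{lem:alltwins}(ii), no vertex of $S$ can serve two equivalent pairs; this is why the paper's budget $2k$ is spent as $k+k$ (pairs meeting $S$ plus these ``bad'' pairs), and why it then pigeonholes only over genuine $S$-matching edges and identifies their type by citing Lemmas~\ref{lem:bigQset}(iii), \ref{lem:bigSset}(ii) and \ref{lem:truetwins}(i)--(ii). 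You instead pigeonhole over \emph{all} pairs avoiding $S$ and prove a posteriori, from the max-degree-one structure of $H-S$ and Lemma~\ref{lem:alltwins}(ii)--(v) alone, that the two selected pairs must actually be $H$-edges of the claimed type ($A\cap B=\emptyset$, then non-emptiness of $A,B$ in the matched case and $A=\emptyset$, $B\neq\emptyset$ in the nested case, using connectivity and $n\geq 3$ exactly where the paper's hypotheses require it). Your route is more self-contained --- it does not rely on Lemmas~\ref{lem:bigQset}--\ref{lem:truetwins} --- at the price of a longer case analysis; the paper's is shorter given its preceding lemmas. One cosmetic remark: your claim that ``at most $2k$ pairs of $Q$ have an endpoint in $S$'' is an over-count (the disjointness argument you give yields at most $k$), but since $|Q|\geq 2k+2^{2k}+1$ the arithmetic still goes through.
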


\begin{proof}
Let $Q$ be an equivalence class of size at least $2k + 2^{2k} + 1$ with respect to $\enest$ or $\ematch$.

By Lemma~\ref{lem:alltwins}~(i), each vertex of $G$ appears in at most one pair of $Q$. Since $|S|=k$, there are at most $k$ pairs of $Q$ with at least one element in $S$.
 Let $$Q'=\{(x,y)\in Q\mid x,y\notin S\mbox{ and $xy$ is not a $S$-matching edge in $H$}\}.$$ We now show that $|Q'|\leq k$. Consider $(x,y)\in Q'$. Since $xy\in E(G)\setminus E(H)$, there exists $w\in V(G)$ such that $wx,wy\in E(H)$. Since $H - S$ 
 is isomorphic to $pK_1+qK_2$, we have that $w\in S$. Let $(x',y')\in Q'\setminus\{(x,y)\}$. 
 By the same argument, there exists $w'\in S$ such that $w'x',w'y'\in E(H)$. Moreover, it cannot be the case that $w=w'$, 
 since this would imply that  $xy',yx'\in E(G)$, which by Lemma~\ref{lem:alltwins}~(ii) is a contradiction to the fact that $(x,y) \enest (x',y')$ or $(x,y) \ematch (x',y')$. That is, for each pair $(x,y)\in Q'$, there is a vertex in $S$ that is adjacent to both elements of the pair and no vertex of $S$ can be adjacent to the elements of more than one pair of $Q'$.
Since $|S|\leq k$, we conclude that $|Q'|\leq k$.

Since $|Q|\geq 2k + 2^{2k} + 1$, there are at least $2^{2k}+1$ $S$-matching edges in $Q$. Given that $|S|\leq k$, by the pigeonhole principle, we have that there are two pairs $(a_1,b_1),(a_2,b_2)\in Q$ such that $a_1b_1$ and $a_2b_2$ are $S$-matched edges in $H$ and $N_H(a_1)\cap S=N_H(a_2)\cap S$ and $N_H(b_1)\cap S=N_H(b_2)\cap S$. In particular, this implies that $a_1b_1$ and $a_2b_2$ are of the same type. It cannot be the case that these two edges are of type 3, since by Lemma~\ref{lem:truetwins}(i) and (ii), these two pairs would not be equivalent with respect to $\enest$ or $\ematch$. We now consider the following two cases, one for each of the mentioned equivalence relations.

Suppose that $Q$ is an equivalence class in the set of pairs of comparable pairs of vertices with respect to the relation $\enest$.
By  Lemma~\ref{lem:bigSset}(ii), they cannot be of type 2.  
Hence, $a_1b_1$ and $a_2b_2$ are of type 1. In particular, either $N_H(a_1)\cap S=N_H(a_2)\cap S\neq \emptyset$
or $N_H(b_1)\cap S=N_H(b_2)\cap S\neq \emptyset$. If $N_H(a_1)\cap S=N_H(a_2)\cap S\neq \emptyset$, then $a_1a_2\in E(G)$ contradicting Lemma~\ref{lem:alltwins} (v). Hence,
$a_1b_1$ and $a_2b_2$ are $S$-matching edges of type~1 in $H$ satisfying $N_H(b_1)\cap S=N_H(b_2)\cap S\neq \emptyset$.

Let now $Q$ be an equivalence class in the set of pairs of adjacent vertices with respect to the relation $\ematch$.
By  Lemma~\ref{lem:bigQset}(iii), they cannot be of type 1.  
Hence, $a_1b_1$ and $a_2b_2$ are of type 2.
This concludes the proof of the lemma.
\end{proof}

\subsection{The algorithm}

In this section we prove our main result. First, we consider connected graphs.
For this, observe that if a connected graph $G$ has a square root $H$ then $H$ is connected as well.

\begin{theorem}\label{thm:mainfpt}
\dpqroot{} can be solved in time $2^{2^{\Oh(k)}}\cdot n^{\Oh(1)}$ on connected graphs.
\end{theorem}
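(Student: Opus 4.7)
The strategy is to shrink the input to a kernel of size $2^{O(k)}$ by exploiting the structural lemmas of Subsection~\ref{sec:fptdonesroot}, and then solve the reduced instance by brute-force enumeration of candidate square roots.

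In the first phase, I would introduce polynomial-time reduction rules to shrink the instance while preserving the yes/no outcome. Two kinds of rules are needed. First, by Observation~\ref{obs:isolated} and Lemma~\ref{lem:truetwins}(iii), any oversized true twin class in $G$ must contain vertices playing symmetric roles in any hypothetical square root (either several $S$-isolated vertices with the same $S$-neighborhood, or endpoints of type~3 $S$-matching edges sharing the same $S$-neighborhood); in either case, surplus representatives are interchangeable and can be deleted, with $p$ or $q$ decremented accordingly. Second, by Lemma~\ref{lem:twonestedpairs} combined with Lemmas~\ref{lem:bigQset} and~\ref{lem:bigSset}, whenever an equivalence class under $\ematch$ or $\enest$ has more than $2k+2^{2k}+1$ elements, it contains two pairs that realize $S$-matching edges of type~2 or type~1 with coinciding $S$-neighborhoods; the automorphism constructed in Lemma~\ref{lem:alltwins}(vi) shows that one such pair can be removed together with its two endpoints, again updating $q$.

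In the second phase, I would show that after exhaustive application of the reduction rules, the reduced graph has at most $2^{O(k)}$ vertices. The pigeonhole argument is that every non-modulator vertex is ultimately classified by its $S$-neighborhood pattern: at most $2^{k}$ classes of $S$-isolated vertices and at most $2^{2k}$ classes of ordered endpoints of $S$-matching edges. Each such class is forced by the reduction rules to contain $2^{O(k)}$ representatives, so summing gives $2^{O(k)}$ non-modulator vertices, and adding $|S|=k$ preserves this bound.

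In the third phase, given a reduced instance with $n' \leq 2^{O(k)}$ vertices, I would enumerate all graphs $H$ on the vertex set of the reduced $G$: there are at most $2^{\binom{n'}{2}} = 2^{2^{O(k)}}$ such graphs, and for each candidate $H$ one can verify in polynomial time whether $H^{2} = G$ and whether some $S \subseteq V(G)$ of size $k$ satisfies $H - S \cong pK_{1}+qK_{2}$. Combined with the $n^{O(1)}$ reduction phase, this gives total running time $2^{2^{O(k)}}\cdot n^{O(1)}$. Alternatively, one can guess $S$ first (at most $\binom{n'}{k} = 2^{O(k^{2})}$ possibilities) and invoke Theorem~\ref{thm:ILP} with $2^{O(k)}$ variables indexing vertex types, which yields the same asymptotic bound. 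The main obstacle will be proving safeness of the reduction rules: given a square root $H$ of the original $G$, one must exhibit a square root $H'$ of the reduced $G'$, which amounts to deleting the two vertices of a superfluous pair from $H$ and arguing that the remaining graph is still a square root with a valid modulator of size $k$. This relies crucially on the automorphism swap from Lemma~\ref{lem:alltwins}(vi) together with the type-wise characterizations in Lemmas~\ref{lem:bigQset}--\ref{lem:truetwins}, which force the matching edges of types~1, 2 and~3 to interact with $S$ in a highly constrained way.
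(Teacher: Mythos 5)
There is a genuine gap in the kernelization step, and it is the crux of the whole argument. Your Rules for oversized $\ematch$/$\enest$ classes match the paper's Rules~\ref{rule:type1} and~\ref{rule:type2} and are fine. But your proposed rule for oversized true twin classes does not work as stated. First, true twins in $G$ need not play interchangeable roles in a square root $H$: Observation~\ref{obs:isolated} and Lemma~\ref{lem:truetwins}(iii) only say that same role plus same $S$-neighborhood implies true twins in $G$, not the converse. A single twin class $T_i$ may be served in $H$ by a mixture of $S$-isolated vertices (with several different $S$-neighborhoods $A$ all satisfying $N_H[A]=N_G[T_i]$) and endpoints of type~3 edges. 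Second, even when pigeonhole guarantees that $T_i$ contains two vertices with identical roles, the algorithm cannot tell which case occurred: deleting a vertex and decrementing $p$ is unsafe if in every solution almost all of $T_i$ consists of type~3 endpoints (and vice versa for $q$); and in the type~3 case you must also delete the matched partner, which you cannot identify from $G$ alone. Branching over these choices is not an option, since up to $\Theta(n)$ deletions may be needed. Consequently your Phase~2 conclusion fails: after the paper's rules only the \emph{number} of twin classes is bounded by $2^{\Oh(k)}$ (Rule~\ref{rule:twins}); each class can still have size $\Theta(n)$, so there is no kernel of size $2^{\Oh(k)}$ and the brute-force enumeration of $2^{\binom{n'}{2}}$ candidate roots in Phase~3 has nothing to run on.

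The missing idea is precisely how the paper copes with arbitrarily large twin classes without shrinking them: it enumerates abstract \emph{skeletons} $(F,S)$ of bounded size (one representative per type~3 edge class and per $S$-isolated class), enumerates $\mathcal{G}$-compatible surjections $\varphi\colon V(F)\to\mathcal{T}$ onto the prime-twin graph, and then determines the multiplicities of the representatives by solving the integer linear program~(\ref{eq:syst}), whose constraints $\sum_{v\in\varphi^{-1}(T_i)}x_v=|T_i|$ exactly encode the distribution of roles across each (large) twin class, together with $\sum_{v\in Y}x_v=p$ and $\sum_{v\in X_1\cup X_2\cup X_3}x_v=2q$. Your parenthetical alternative (``guess $S$ first and invoke Theorem~\ref{thm:ILP}'') gestures in this direction, but it still presupposes the $2^{\Oh(k)}$-vertex kernel to make $\binom{n'}{k}$ small, and it omits the compatibility condition and the counting constraints that make the ILP both sound and complete. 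To repair the proof you would essentially have to reconstruct the skeleton-plus-ILP machinery of Claims~\ref{claim:potential}--\ref{claim:system}.
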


\begin{proof}
Let $(G,p,q,k)$ be an instance of \dpqroot{} with $G$ being a connected graph. Recall that we want to determine if $G$ has a square root $H$ such that $H-S$ is isomorphic to $pK_1+qK_2$, for a modulator $S\subset V(G)$ with $|S|=k$, where $p+2q+k=n$. If $G$ has at most two vertices, then the problem is trivial. Notice also that if $k=0$, then $(G,p,q,k)$ may be a \yes-instance only if $G$ has at most two vertices, because $G$ is connected. Hence, from now on we assume that $n\geq 3$ and $k\geq 1$.

We exhaustively apply the following rule to reduce the number of type 1 edges in a potential solution. For this, we consider the set $\mathcal{A}$ of comparable pairs of vertices of $G$ and find its partition into equivalence classes with respect to $\enest$.  Note that $\mathcal{A}$ contains at most $2m$ elements and can be constructed in time $\Oh(mn)$. Then
the partition of $\mathcal{A}$ into equivalence classes can be found in time $\Oh(m^2n)$ by checking the neighborhoods of the vertices of each pair.

\begin{Rule}\label{rule:type1}
If there is an equivalence class $Q\subseteq \mathcal{A}$ with respect to $\enest$ such that $|Q|\geq 2k+2^{2k}+2$, delete two vertices of $G$ that form a pair of $Q$ and set $q:=q-1$.
\end{Rule}

The following claim shows that Rule~\ref{rule:type1} is safe.

\begin{claim}\label{claim:rule1safe}
If $G'$ is the graph obtained from $G$ by the application of Rule~\ref{rule:type1}, then $(G,p,q,k)$ and $(G',p,q-1,k)$ are equivalent instances of \dpqroot{} and $G'$ is connected.
\end{claim}

\begin{claimproof}
Let $G'=G-\{x,y\}$ for a pair $(x,y)\in Q$.

First assume $(G,p,q,k)$ is a \yes-instance to \dpqroot{} and let $H$ be a square root of $G$ that is a solution to this problem with a modulator $S$. By Lemma~\ref{lem:twonestedpairs}, $H$ has two $S$-matching edges $x'y'$ and $x''y''$ of type~1 such that $(x',y'),(x'',y'')\in Q$ and $N_H(y')\cap S=N_H(y'')\cap S\neq\emptyset$. Note that any edge of $H$ that has an endpoint in $y'$, also has an endpoint in $y''$ (except for $x'y'$). Hence, $H'=H-\{x',y'\}$ is a square root of $G''=G-\{x',y'\}$ with one less $S$-matching edge. Moreover, $H'$ is connected, because $H$ is connected and $N_H(y')\setminus\{x'\}=N_{H'}(y'')\setminus\{x''\}$. This implies that $G''$ is connected as well.
We conclude that $(G'',p,q-1,k)$ is a \yes-instance with $G''$ be a connected graph.
Because $G'$ and $G''$ are isomorphic by Lemma~\ref{lem:alltwins}(vi), we have that $(G',p,q-1,k)$ is a \yes-instance as well and $G'$ is connected.

Now assume $(G',p,q-1,k)$ is a \yes-instance to \dpqroot{} and let $H'$ be a square root of $G'$ that is a solution to this problem with a modulator $S$. Recall that $Q$ consists of pairs of vertices whose end-vertices are pairwise distinct by Lemma~\ref{lem:alltwins}(i). Hence,  $Q'=Q\setminus\{(x,y)\}$ contains at least $2k+2^{2k}+1$ elements.
By the definition of $\enest$, every two pairs of  $Q'=Q\setminus\{(x,y)\}$ are equivalent with respect to the relation for $G'$.
Thus, by Lemma~\ref{lem:twonestedpairs}, there are $(x',y'),(x'',y'')\in Q'$ such that $x'y'$ and $x''y''$ are $S$-matching edges of type~1 in $H'$ and $N_H(y')\cap S=N_H(y'')\cap S\neq\emptyset$. We construct a square root $H$ for $G$ by adding the edge $xy$ to $H'$ as an $S$-matching edge of type~1 with $N_H(y)\cap S=N_H(y')\cap S$.
To see that $H$ is indeed a square root for $G$, note that since $H'$ was a square root for $G'$, we have $H'^2=G'$. Now we argue about the edges of $G$ that are incident to $x$ and $y$. Since $(x,y),(x',y')\in Q$, we have that $N_G(x)\setminus \{y\}=N_G(x')\setminus \{y'\}$ and $N_G[y]\setminus \{x\}=N_G[y']\setminus\{x'\}$. This means that if $w\neq x$ is a neighbor of $y$ in $G$, then $w$ is also a neighbor of $y'$. Since $H'$ is a square root of $G'$, we have that either $y'w\in E(H')$ or $y'$ and $w$ are at distance two in $H'$. Since $N_H(y)\cap S=N_H(y')\cap S$, the same holds for $y$: it is either adjacent to $w$ or it is at distance two from $w$ in $H$. A symmetric argument holds for any edge incident to $x$ in $G$. Hence, we conclude that $H$ is indeed a square root of~$G$.
\end{claimproof}

We also want to reduce the number of type 2 edges in a potential solution. Let $\mathcal{B}$ be the set of pairs of adjacent vertices. We construct the partition of $\mathcal{B}$ into equivalence classes with respect to $\ematch$.  We have that $|\mathcal{B}|=2m$ and, therefore,
the partition of $\mathcal{B}$ into equivalence classes can be found in time $\Oh(m^2n)$ by checking the neighborhoods of the vertices of each pair.
 We exhaustively apply the following rule.

\begin{Rule} \label{rule:type2}
If there is an equivalence class $Q\subseteq \mathcal{B}$ with respect to $\ematch$ such that $|Q|\geq 2k+2^{2k}+2$, delete two vertices of $G$ that form a pair of $Q$ and set $q:=q-1$.
\end{Rule}

The following claim shows that Rule~\ref{rule:type1} is safe.

\begin{claim}\label{claim:rule2safe}
If $G'$ is the graph obtained from $G$ by the application of Rule~\ref{rule:type2}, then $(G,p,q,k)$ and $(G',p,q-1,k)$ are equivalent instances of \dpqroot{} and $G'$ is connected.
\end{claim}

\begin{claimproof}
The proof of this claim follows the same lines as the proof of Claim~\ref{claim:rule1safe}. Let $G'=G-\{x,y\}$ for $(x,y)\in Q$.

Let $(G,p,q,k)$ be a \yes-instance to \dpqroot{} and let $H$ be a square root of $G$ that is a solution to this problem with a modulator $S$. By Lemma~\ref{lem:twonestedpairs}, $H$ has two $S$-matching edges $x'y'$ and $x''y''$ of type~2 such that $(x',y'),(x'',y'')\in Q$ and $N_H(x')\cap S=N_H(x'')\cap S$ and $N_H(y')\cap S=N_H(y'')\cap S$.
Note that any edge of $H$ that has an endpoint in $x'$ (resp.\ $y'$), also has an endpoint in $x''$ (resp.\ $y''$), except for $x'y'$ (resp.\ $x''y''$). Thus, $H'=H-\{x',y'\}$ is a square root of $G''=G-\{x',y'\}$ with one less $S$-matching edge. We also have that
$H'$ is connected, because $H$ is connected. This implies that $G''$ is also connected.
We conclude that $(G'',p,q-1,k)$ is a \yes-instance with $G''$ be a connected graph.
Because $G'$ and $G''$ are isomorphic by Lemma~\ref{lem:alltwins} (vi), we have that $(G',p,q-1,k)$ is a \yes-instance as well and $G'$ is connected.

Now assume $(G',p,q-1,k)$ is a \yes-instance to  \dpqroot{} and let $H'$ be a square root of $G'$ that is a solution to this problem with a modulator $S$. Recall that $Q$ consists of pairs of vertices whose end-vertices are pairwise distinct by Lemma~\ref{lem:alltwins} (i). Hence,  $Q'=Q\setminus\{(x,y)\}$ contains at least $2k+2^{2k}+1$ elements.
By the definition of $\ematch$, every to pairs of  $Q'=Q\setminus\{(x,y)\}$ are equivalent with respect to the relation for $G'$.
Thus, by Lemma~\ref{lem:twonestedpairs}, there are $(x',y'),(x'',y'')\in Q'$ such that $x'y'$ and $x''y''$ are $S$-matching edges of type~2 in $H'$ with $N_H(x')\cap S=N_H(x'')\cap S$ and $N_H(y')\cap S=N_H(y'')\cap S$. We construct a square root $H$ for $G$ by adding the edge $xy$ to $H'$ as a $S$-matching edge of type~2 with $N_H(x)\cap S=N_H(x')\cap S$ and $N_H(y)\cap S=N_H(y')\cap S$.
To see that $H$ is indeed a square root for $G$, note that since $H'$ was a square root for $G'$, we have $H'^2=G'$. Now we argue about the edges of $G$ that are incident to $x$ and $y$. Since $(x,y),(x',y')\in Q$, we have that $N_G[x]\setminus \{y\}=N_G[x']\setminus \{y'\}$ and $N_G[y]\setminus \{x\}=N_G[y']\setminus\{x'\}$. This means that if $w\neq x$ is a neighbor of $y$ in $G$, then $w$ is also a neighbor of $y'$. Since $H'$ is a square root of $G'$, we have that either $y'w\in E(H')$ or $y'$ and $w$ are at distance two in $H'$. Since $N_H(y)\cap S=N_H(y')\cap S$, the same holds for $y$: it is either adjacent to $w$ in $H$ or it is at distance two from $w$ in $H$. A symmetric argument holds for any edge incident to $x$ in $G$. Hence, we conclude that $H$ is indeed a square root of~$G$.
\end{claimproof}

After exhaustive application of Rules~\ref{rule:type1} and~\ref{rule:type2} we obtain the following bounds on the number of $S$-matching edges of types~1 and~2 in a potential solution.

\begin{claim}\label{claim:boundedtype12}
Let $(G',p,q',k)$ be the instance of  \dpqroot{} after exhaustive applications of Rules~\ref{rule:type1} and~\ref{rule:type2}. Then $G'$ is a connected graph and a potential solution $H$ to the instance  has at most $2^k(2k+2^{2k}+1)$ $S$-matching edges of type~1 and $2^{2k}(2k+2^{2k}+1)$ $S$-matching edges of type~2.
\end{claim}

\begin{claimproof}
Clearly, $G'$ is connected by Claims~\ref{claim:rule1safe} and~\ref{claim:rule2safe}.

By Lemma~\ref{lem:bigQset}(ii) and Lemma~\ref{lem:bigSset}(ii), if two $S$-matching edges $xy$ and $x'y'$ of a potential solution behave in the same way with respect to $S$, that is, if $N_H(x)\cap S=N_H(x')\cap S$ and $N_H(y)\cap S=N_H(y')\cap S$, then they belong to the same equivalence class (either with respect to $\enest$ or to $\ematch$). Hence, after exhaustive application of Rule~\ref{rule:type1}, for each set $A\subseteq S$, there are at most $2k+2^{2k}+1$ $S$-matching edges $xy$ such that $N_H(y)\cap S=A$. Thus, there are at most $2^k(2k+2^{2k}+1)$ $S$-matching edges of type~1 in $H$. Analogously, after exhaustive application of Rule~\ref{rule:type1}, for each $A,B\subseteq S$, there are at most $2k+2^{2k}+1$ $S$-matching edges $xy$ such that $N_H(x)\cap S=A$ and $N_H(y)\cap S=B$. Hence, there are at most $2^{2k}(2k+2^{2k}+1)$ $S$-matching edges of type~2.
\end{claimproof}

For simplicity, we call $(G,p,q,k)$ again the instance obtained after exhaustive applications of Rules~\ref{rule:type1} and~\ref{rule:type2}. Notice that $G$ can be constructed in polynomial time, since the equivalence classes according to $\ematch$ and $\enest$ can be computed in time $\Oh(m^2n)$. 

By Claim~\ref{claim:boundedtype12}, in a potential solution, the number of $S$-matching edges of types~1 and~2 is bounded by a function of $k$. We will make use of this fact to make further guesses about the structure of a potential solution. To do so, we first consider the classes of true twins of $G$ and show the following.

\begin{claim}\label{claim:truetwinclasses}
Let $\mathcal{T}=\{T_1,\ldots,T_r\}$ be the partition of $V(G)$ into classes of true twins. If $(G,p,q,k)$ is a \yes-instance to our problem, then $r\leq 2(2^k+2^{2k})(2k+2^{2k}+1)+k+2^k+2\cdot2^{2k}$.
\end{claim}
\begin{claimproof}
Assume $(G,p,q,k)$ is a \yes-instance to our problem and let $H$ be a square root of $G$ containing a modulator $S$ of size $k$ such that $G - S$ is isomorphic to $pK_1+qK_2$. Let $X$ be the set of vertices of $G$ that are endpoints of type 1 and type 2 $S$-matching edges in $H$. By Claim~\ref{claim:boundedtype12}, $|X|\leq 2(2^k+2^{2k})(2k+2^{2k}+1)$. Note that if two $S$-isolated vertices of $H$ have the same neighborhood in $S$, they are true twins in $G$ by Observation~\ref{obs:isolated}.
Moreover, by Lemma~\ref{lem:truetwins}(iii), if $xy$ and $x'y'$ are two type~3 $S$-matching edges in $H$ satisfying $N_H(x)\cap S=N_H(x')\cap S$ and $N_H(y)\cap S=N_H(y')\cap S$, then $x$ and $x'$ (resp.\ $y$ and $y'$) are true twins in $G$. As already explained, there are no other types of edges in $H - S$.
Thus, we have at most $2(2^k+2^{2k})(2k+2^{2k}+1)$ distinct classes of true twins among the vertices of $X$, at most $k$ classes among the vertices of $S$, at most $2^k$ classes among the $S$-isolated vertices and at most $2\cdot 2^{2k}$ classes among the vertices that are endpoints of type~3 $k$-matching edges. This shows that $r\leq 2(2^k+2^{2k})(2k+2^{2k}+1)+k+2^k+2\cdot2^{2k}$.
\end{claimproof}

Observe that the partition $\mathcal{T}=\{T_1,\ldots,T_r\}$ of $V(G)$ into classes of true twins can be constructed in linear time~\cite{TEDDER08}. Using Claim~\ref{claim:truetwinclasses}, we apply the following rule.

\begin{Rule} \label{rule:twins}
If $|\mathcal{T}|>2(2^k+2^{2k})(2k+2^{2k}+1)+k+2^k+2\cdot2^{2k}$, then return \no{} and stop.
\end{Rule}

From now, we assume that we do not stop by Rule~\ref{rule:twins}. This means that $|\mathcal{T}|=\Oh(2^{4k})$.

Suppose that $(G,p,q,k)$ is a \yes-instance to \dpqroot{} and let $H$ be a square root of $G$ that is a solution to this instance with a modulator $S$. We say that $F$ is the \emph{skeleton} of $H$ with respect to $S$ if $F$ is obtained from $H$ be the exhaustive application of the following rules:
\begin{enumerate}
\item if $H$ has two distinct type~3 $S$-matching edges $xy$ and $x'y'$ with $N_H(x)\cap S=N_H(x')\cap S$ and $N_H(y)\cap S=N_H(y')\cap S$, then delete $x$ and $y$,
\item if $H$ has two distinct $S$-isolated vertices $x$ and $y$ with $N_H(x)=N_H(y)$, then delete $x$.
\end{enumerate}
In other word, we replace the set of $S$-matching edges of type~3 with the same neighborhoods on the end-vertices in $S$ by a single representative and we replace the set of $S$-isolated vertices with the same neighborhoods by a single representative.

We say that a graph $F$ is a \emph{potential solution skeleton} with respect to a set $S\subseteq V(F)$ of size $k$ for $(G,p,q,k)$ if the following conditions hold:
\begin{itemize}
\item[(i)] $F-S$ has maximum degree one, that is, $F-S$ is isomorphic to $sK_1+tK_2$ for some nonnegative integers $s$ and $t$,
\item[(ii)] for every two distinct $S$-isolated vertices $x$ and $y$ of $F$, $N_F(x)\neq N_F(y)$,
\item[(iii)] for every two distinct $S$-matching edges $xy$ and $x'y'$ of type~3, either $N_F(x)\cap S\neq N_H(x')\cap S$ or $N_F(y)\cap S\neq N_H(y')\cap S$,
\item[(iv)] for every $A,B\subseteq S$ such that $A\cap B=\emptyset$ and at least one of $A$ and $B$ is nonempty,
$\{xy\in E(F-S)\mid N_F(x)\cap S=A\text{ and }N_F(y)\cap S=B\}$ has size at most $2k+2^{2k}+1$.
\end{itemize}
Note that (iv) means that the number of type~1 and type~2 $S$-matched edges with the same neighbors in $S$ is upper bounded by  $2k+2^{2k}+1$.
Since Rules~\ref{rule:type1} and \ref{rule:type2} cannot be applied to $(G,p,q,k)$,
we obtain the following claim by
 Lemmas~\ref{lem:bigQset}(ii) and \ref{lem:bigSset}(ii).

\begin{claim}\label{claim:potential}
Every skeleton of a solution to $(G,p,q,k)$ is a  potential solution skeleton for this instance with respect to the modulator $S$.
\end{claim}

We observe that each potential solution skeleton has bounded size.

\begin{claim}\label{claim:potential-size}
For every potential solution skeleton $F$ for $(G,p,q,k)$, $$|V(F)|\leq k+2^k+2\cdot 2^{2k}+2\cdot 2^{2k}(2k+2^{2k}+1).$$
\end{claim}

\begin{claimproof}
By the definition, $F$ has $k$ vertices in $S$, at most $2^k$ $S$-isolated vertices and at most $2\cdot 2^{2k}$ end-vertices of $S$-matching edges of type~3. For each $A,B\subseteq S$ such that $A\cap B=\emptyset$ and at least one of $A$ and $B$ is nonempty,
$F$ has at most $k+2^{2k}+1$ $S$-matching edges $xy$ of type~1 or type~2 with $N_F(x)\cap S=A$  and $N_F(y)\cap S=B$. Then  we have at most $2\cdot 2^{2k}(2k+2^{2k}+1)$ end-vertices of these edges.
\end{claimproof}

Moreover, we can construct the family $\mathcal{F}$ of all potential solution skeletons together with their modulators.

\begin{claim}\label{claim:potential-fam-size}
The family $\mathcal{F}$ of all pairs $(F,S)$, where $F$ is a potential solution skeleton and $S\subseteq V(F)$ is a modulator of size $k$,
has size at most $2^{\binom{k}{2}}+2^{2^k}+2^{2^{2k}}+(2k+2^{2k}+2)^{2^{2k}}$ and
can be constructed in time $2^{2^{\Oh(k)}}$.
\end{claim}

\begin{claimproof}
There are at most $2^{\binom{k}{2}}$ distinct subgraph with the set of vertices $S$ of size $k$. We have at most $2^{2^k}$ distinct sets of $S$-isolated vertices and there are at most  $2^{2^{2k}}$  distinct sets of $S$-matching edges of type~3.  For each $A,B\subseteq S$ such that $A\cap B=\emptyset$ and at least one of $A$ and $B$ is nonempty, there are $k+2^{2k}+2$ possible sets of $S$-matching edges $xy$ of type~1 or type~2 such that $N_F(x)\cap S=A$ and $N_F(y)\cap S=B$. Therefore, we have at most $(2k+2^{2k}+2)^{2^{2k}}$ distinct sets of type~1 or type~2.
Then $|\mathcal{F}|\leq 2^{\binom{k}{2}}+2^{2^k}+2^{2^{2k}}+(2k+2^{2k}+2)^{2^{2k}}$.
Finally, it is straightforward to see that $\mathcal{F}$ can be constructed in $2^{2^{\Oh(k)}}$ time.
\end{claimproof}

Using Claim~\ref{claim:potential-fam-size}, we construct $\mathcal{F}$, and for every $(F,S)\in\mathcal{F}$, we check whether there is a solution $H$ to $(G,p,q,k)$  with a modulator $S'$, whose skeleton is isomorphic to $F$ with an isomorphism that maps $S$ to $S'$. If we find such a solution, then $(G,p,q,k)$ is a \yes-instance. Otherwise, Claims~\ref{claim:potential} guarantees that  $(G,p,q,k)$ is a \no-instance.

\medskip
Assume that we are given $(F,S)\in\mathcal{F}$ for the instance $(G,p,q,k)$.

Recall that we have the partition $\mathcal{T}=\{T_1,\ldots,T_r\}$ of $V(G)$ into true twin classes of  size at most $2(2^k+2^{2k})(2k+2^{2k}+1)+k+2^k+2\cdot2^{2k}$ by Rule~\ref{rule:twins}.
Recall also  that the prime-twin graph $\mathcal{G}$ of $G$ is the graph with the vertex set $\mathcal{T}$ such that two distinct vertices $T_i$ and $T_j$ of $\mathcal{G}$ are adjacent if and only if $uv\in E(G)$ for $u\in T_i$ and $v\in T_j$. Clearly, given $G$ and $\mathcal{T}$, $\mathcal{G}$ can be constructed in linear time.
For an induced subgraph $R$ of $G$, we define $\tau_R\colon V(R)\rightarrow \mathcal{T}$ to be a mapping such that $\tau_R(v)=T_i$ if $v\in T_i$ for $T_i\in\mathcal{T}$.

Let $\varphi\colon V(F)\rightarrow\mathcal{T}$ be a surjective mapping. We say that $\varphi$ is \emph{$\mathcal{G}$-compatible} if every two distinct vertices $u$ and $v$ of $F$ are adjacent in $F^2$ if and only if $\varphi(u)$ and $\varphi(v)$ are adjacent in $\mathcal{G}$.

\begin{claim}\label{claim:comp-skeleton}
Let $F$ be the skeleton of a solution $H$ to $(G,p,q,k)$. Then $\tau_F\colon V(F)\rightarrow \mathcal{T}$ is a $\mathcal{G}$-compatible surjection.
\end{claim}

\begin{claimproof}
Recall that $H^2=G$ and $F$ is an induced subgraph of $H$. Then the definition of $F$ and Lemma~\ref{lem:truetwins} (iii) immediately imply that $\tau_F$ is  a $\mathcal{G}$-compatible surjection.
\end{claimproof}

Our next step is to reduce our problem to solving a system of linear integer inequalities.
Let $\varphi\colon V(F)\rightarrow\mathcal{T}$ be a $\mathcal{G}$-compatible surjective mapping.
Let $X_1$, $X_2$ and $X_3$ be the sets of end-vertices of the $S$-matching edges of type~1, type~2 and type~3 respectively in $F$. Let also $Y$ be the set of $S$-isolated vertices of $F$.
For every vertex $v\in V(F)$, we introduce an integer variable $x_v$. Informally, $x_v$ is the number of vertices of a potential solution $H$ that correspond to a vertex $v$.
\begin{equation}\label{eq:syst}
\begin{cases}
x_v=1&\mbox{for }v\in S\cup X_1\cup X_2,\\
x_v\geq 1&\mbox{for }v\in Y\cup X_3,\\
x_u-x_v=0&\mbox{for every type~3 edge } uv,\\
\sum_{v\in Y}x_v=p,\\
\sum_{v\in X_1\cup X_2\cup X_3}x_v=2q,\\
\sum_{v\in\varphi^{-1}(T_i)}x_v=|T_i|&\mbox{for }T_i\in\mathcal{T}.
\end{cases}
\end{equation}

The following claim is crucial for our algorithm.

\begin{claim}\label{claim:system}
The instance $(G,p,q,k)$ has a solution $H$ with a modulator $S'$ such that  there is an isomorphism $\psi\colon V(F)\rightarrow V(F')$ for the skeleton $F'$ of $H$ mapping $S$ to $S'$  if and only if there is  a $\mathcal{G}$-compatible surjective mapping $\varphi\colon V(F)\rightarrow\mathcal{T}$ such that
the system (\ref{eq:syst}) has a solution.
\end{claim}

\begin{claimproof}
Suppose that there is a solution  $H$ to $(G,p,q,k)$ with a modulator $S'$, whose skeleton $F'$ is isomorphic to $F$  with an isomorphism that maps $S$ to $S'$.  To simplify notation, we identify $F$ and $F'$ and identify $S$ and $S'$. We set $\varphi=\tau_{F}$. By Claim~\ref{claim:comp-skeleton}, $\varphi$ is a $\mathcal{G}$-compatible surjection.
For $v\in Y$, we define the value of
$$x_v=|\{u\in V(H)\mid u\text{ is an }S\text{-isolated and }N_H(u)=N_H(v)\}|.$$
For each $S$-matching edge $uv$ of type~3 of $F$,
\begin{align*}
x_u=x_v=|\{xy\in E(H)\mid & xy\text{ is an }S\text{-matching edge},\\
&N_H(x)\cap S=N_H(u)\cap S\text{ and }N_H(y)\cap S=N_H(v)\cap S\}|.
\end{align*}
This defines the value of the variables $x_v$ for $v\in X_3$.
Recall that for all $v\notin Y\cup X_3$, $x_v=1$ by the definition of (\ref{eq:syst}).
It is straightforward to verify that the constructed assignment of the variables gives a solution of (\ref{eq:syst}) for $\varphi$.

For the opposite direction, let $\varphi\colon V(F)\rightarrow\mathcal{T}$ be a $\mathcal{G}$-compatible surjective mapping
such that the system (\ref{eq:syst}) has a solution. Assume that the variables $x_v$ have values that satisfy (\ref{eq:syst}).
We construct the graph $\hat{F}$ from $F$ and the extension $\hat{\varphi}$ of $\varphi$ as follows.
\begin{itemize}
\item For every $S$-isolated vertex $v$ of $F$, replace $v$ by $x_v$ copies that are adjacent to the same vertices as $v$ and define $\hat{\varphi}(x)=\varphi(v)$ for the constructed vertices.
\item For every $S$-matching edge $uv$ of type~3, replace $u$ and $v$ by $x_u=x_v$ copies of pairs of adjacent vertices $x$ and $y$, make $x$ and $y$ adjacent to the same vertices of $S$ as $u$ and $v$ respectively, and define $\hat{\varphi}(x)=\varphi(u)$ and $\hat{\varphi}(y)=\varphi(v)$ respectively.
\item Set $\hat{\varphi}(v)=\varphi(v)$ for the remaining vertices.
\end{itemize}
Observe that by the construction and the assumption that the values of the variables $x_v$ satisfy (\ref{eq:syst}), $\hat{F}$ has $p$ $S$-isolated vertices, $q$ $S$-matching edges, and for every $T_i\in\mathcal{T}$, $|\{v\in V(\hat{F})\mid v\in\hat{\varphi}^{-1}(T_i)\}|=|T_i|$.  We define $\psi\colon V(G)\rightarrow V(G)$ by mapping $|T_i|$ vertices of $\{v\in V(\hat{F})\mid v\in\hat{\varphi}^{-1}(T_i)\}$ arbitrarily into distinct vertices of $T_i\subseteq V(G)$ for each $T_i\in\mathcal{T}$. Clearly, $\psi$ is a bijection.
Notice that by Lemma~\ref{lem:truetwins} (iii) and Observation~\ref{obs:isolated}, the sets of vertices of $\hat{F}$ constructed from $S$-isolated vertices and the end-vertices of $S$-matching edges are sets of true twins in $\hat{F}^2$. Also we have that, because $\varphi$ is  $\mathcal{G}$-compatible, two distinct vertices $u,v\in V(\hat{F})$ are adjacent in $\hat{F}^2$ if and only if either $\hat{\varphi}(u)=\hat{\varphi}(v)$ or $\hat{\varphi}(u)\neq\hat{\varphi}(v)$ and $\hat{\varphi}(u)\hat{\varphi}(v)\in E(\mathcal{G})$. This implies that $\psi$ is an isomorphism of $\hat{F}^2$ and $G$, which means that $G$ has a square root isomorphic to $\hat{F}$. Clearly, $\psi|_{V(F)}$ is an isomorphism of $F$ into the skeleton of $H$ mapping $S$ to $S'$.
\end{claimproof}

By Claim~\ref{claim:system}, we can state  our task as follows: verify whether there is a $\mathcal{G}$-compatible surjection $\varphi\colon V(F)\rightarrow\mathcal{T}$ such that (\ref{eq:syst})
has a solution.

For this, we consider all at most $|V(F)|^{|\mathcal{T}|}=2^{2^{\Oh(k)}}$ surjections  $\varphi\colon V(F)\rightarrow\mathcal{T}$. For each $\varphi$, we verify whether it is $\mathcal{G}$-compatible.
Clearly,  it can be done in time $\Oh(|V(F)|^3)$. If $\varphi$ is $\mathcal{G}$-compatible, we construct the system (\ref{eq:syst}) with $|V(F)|=2^{\Oh(k)}$ variables in time $\Oh(|V(F)|^2)$. Then we solve it by applying Theorem~\ref{thm:ILP} in $2^{2^{\Oh(k)}}\log n$ time. This completes the description of the algorithm and its correctness proof.

To evaluate the total running time, notice that the preprocessing step, that is, the exhaustive application of Rules~\ref{rule:type1} and \ref{rule:type2} is done in polynomial time. Then the construction of $\mathcal{T}$, $\mathcal{G}$ and the application of Rule~\ref{rule:twins} is polynomial as well. By Claim~\ref {claim:potential-fam-size}, $\mathcal{F}$ is constructed in time $2^{2^{\Oh(k)}}$. The final steps, that is, constructing $\varphi$ and systems (\ref{eq:syst}) and solving the systems, can be done in time $2^{2^{\Oh(k)}}\log n$. Therefore, the total running time is $2^{2^{\Oh(k)}}\cdot n^{\Oh(1)}$.
\end{proof}

For simplicity, in Theorem~\ref{thm:mainfpt}, we assumed that the input graph is connected but it is not difficult to extend the result for general case.

\begin{corollary}\label{cor:mainfpt}
\dpqroot{} can be solved in time $2^{2^{\Oh(k)}}\cdot n^{\Oh(1)}$.
\end{corollary}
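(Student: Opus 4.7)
The plan is to reduce the general case to Theorem~\ref{thm:mainfpt} by handling each connected component of the input graph separately. The key observation is that if $G=H^2$ then $H$ and $G$ have exactly the same connected components on the same vertex sets, because squaring only adds edges between vertices at distance two, which necessarily lie in the same component of $H$. Consequently, if $(G,p,q,k)$ is a \yes-instance with square root $H$ and modulator $S$, then for each connected component $G_j$ of $G$ the induced subgraph $H_j=H[V(G_j)]$ is a square root of $G_j$ with modulator $S_j=S\cap V(G_j)$ of some size $k_j$ and parameters $p_j,q_j$ satisfying $p_j+2q_j+k_j=|V(G_j)|$, and $\sum_j (p_j,q_j,k_j)=(p,q,k)$. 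Conversely, any combination of component-wise square roots assembles into a square root of $G$.

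Next, I would classify components by size. A single-vertex component admits only $(p_j,q_j,k_j)\in\{(1,0,0),(0,0,1)\}$, and a component isomorphic to $K_2$ admits $(p_j,q_j,k_j)\in\{(0,0,2),(1,0,1),(0,1,0)\}$ (the option $(2,0,0)$ is impossible, since two $S$-isolated vertices forming their own component of $G$ would have no common neighbor in $S$ and hence cannot be adjacent in $H^2$). Any component $G_j$ with $|V(G_j)|\geq 3$, which I call \emph{big}, must satisfy $k_j\geq 1$: otherwise $H_j-S_j=H_j$ would be a disjoint union of isolated vertices and edges, contradicting that $H_j$ is connected on at least three vertices. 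Hence $G$ has at most $k$ big components.

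For each big component $G_j$ and each candidate triple $(p_j,q_j,k_j)$ with $k_j\in\{1,\ldots,k\}$ and $p_j+2q_j+k_j=|V(G_j)|$, I would invoke the algorithm of Theorem~\ref{thm:mainfpt}. There are $\Oh(kn)$ such triples per big component, each call costs $2^{2^{\Oh(k)}}\cdot n^{\Oh(1)}$ time, and there are at most $k$ big components, giving total cost $2^{2^{\Oh(k)}}\cdot n^{\Oh(1)}$. This produces, for each big component $G_j$, the set $\mathcal{S}_j$ of feasible triples.

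It remains to decide whether one can pick a triple from each $\mathcal{S}_j$ and distribute the (interchangeable) small components among their admissible contribution types so that the totals equal $(p,q,k)$. I would do this with a straightforward dynamic program whose state records the running sums of $k$, $p$, and $q$ after processing the big components (equivalently, by multiplying the polynomials $F_j(x,y,z)=\sum_{(p_j,q_j,k_j)\in\mathcal{S}_j}x^{k_j}y^{q_j}z^{p_j}$ and the fixed polynomials $x+z$ and $x^2+xz+y$ for the small components, and looking at the coefficient of $x^k y^q z^p$). The state space has size $\Oh(kn^2)$ and each transition is polynomial, so the combination step runs in polynomial time. The only point that needs care is the exhaustive case analysis for components of size at most two, which is finite and straightforward; no new structural ideas beyond Theorem~\ref{thm:mainfpt} are needed.
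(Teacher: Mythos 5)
Your proposal is correct and follows essentially the same route as the paper: decompose $G$ into connected components, solve each component for all feasible parameter triples via Theorem~\ref{thm:mainfpt}, and combine the results with a knapsack-style dynamic program over $(p,q,k)$. The extra refinements you add (bounding the number of big components by $k$ and the explicit case analysis for components of size at most two) are correct but unnecessary, since the paper simply invokes the connected-case algorithm on every component, which already handles components with at most two vertices trivially.
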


\begin{proof}
Let $(G,p,q,k)$ be an instance of \dpqroot and let $C_1,\ldots,C_\ell$ be the components of $G$. If $\ell=1$, we apply Theorem~\ref{thm:mainfpt}. Assume that this is not the case and $\ell\geq 2$. For each $i\in\{1,\ldots,\ell\}$, we use Theorem~\ref{thm:mainfpt} to solve the instances $(C_i,p',q',k')$ such that $k'\leq k$, $p'\leq p$, $q'\leq q$ and
$k'+p'+2q'=|V(C_i)|$. Then we combine these solutions to solve the input instance using a dynamic programming algorithm.

For $h\in\{1,\ldots,\ell\}$, let $G_h$ be the subgraph of $G$ with the components $C_1,\ldots,C_h$. Clearly, $G_h=G$. For each $h\in\{1,\ldots,\ell\}$, every triple of nonnegative integers $k',p',q'$ such that  $k'\leq k$, $p'\leq p$, $q'\leq q$ and
$k'+p'+2q'=|V(G_h)|$, we solve the instance $(G_h,p',q',k')$.
 For $h=1$, this is already done as $G_1=C_1$. Let $h\geq 2$. Then it is straightforward to observe that
$(G_h,p',q',k')$ is a \yes-instance if and only if there are nonnegative integers $k_1,p_1,q_1$ and $k_2,p_2,q_2$ such that
\begin{itemize}
\item  $k_1+k_2= k'$, $p_1+p_2= p'$, $q_1+q_2= q'$, and
\item $k_1+p_1+2q_1=|V(G_{h-1})|$ and $k_2+p_2+2q_2=|V(C_{h})|$,
\end{itemize}
for which both $(G_{h-1},p_1,q_1,k_1)$ and $(C_h,p_2,q_2,k_2)$ are \yes-instances.
This allows to solve $(G_h,p',q',k')$ in time $\Oh(n^2)$ if we are given the solutions for $(G_{h-1},p_1,q_1,k_1)$ and $(C_h,p_2,q_2,k_2)$.
We obtain that, given the tables of solutions for the components of $G$, we can solve the problem for $G$ in time $\Oh(n^5)$. We conclude that the total running time is $2^{2^{\Oh(k)}}\cdot n^{\Oh(1)}$.
\end{proof}

Corollary~\ref{cor:mainfpt} gives the following statement for the related problems.

\begin{corollary}\label{cor:VC}
\sroot{}, \textsc{Distance-$k$-to-Matching Square Root} and \textsc{Distance-$k$-to-Degree-One Square Root}
 can be solved in time $2^{2^{\Oh(k)}}\cdot n^{\Oh(1)}$.
\end{corollary}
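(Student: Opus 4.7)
The plan is to derive each of the three problems as an immediate reduction to \dpqroot{} and invoke Corollary~\ref{cor:mainfpt} on a polynomial number of derived instances, so that the double-exponential factor $2^{2^{\Oh(k)}}$ is preserved while only polynomial overhead in $n$ is incurred.

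First I would handle \sroot{}. An instance asks whether $G$ has a square root $H$ with a vertex cover of size at most $k$. Observe that a graph $H$ has a vertex cover of size at most $k'$ precisely when $V(H)$ can be partitioned into a set $S$ of size $k'$ and an independent set of size $n-k'$, i.e., $H - S$ is isomorphic to $(n-k')K_1$. Thus I would iterate over $k' \in \{0,1,\ldots,k\}$ and, for each choice, call the algorithm of Corollary~\ref{cor:mainfpt} on $(G, n-k', 0, k')$. The answer to \sroot{} is \yes{} if and only if at least one such call returns \yes. There are $k+1 = \Oh(k)$ calls, each taking time $2^{2^{\Oh(k)}}\cdot n^{\Oh(1)}$.

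Next I would handle \textsc{Distance-$k$-to-Matching Square Root}, which corresponds to $p=0$: we want $H-S$ to be a perfect matching on $n-k'$ vertices for some modulator of size $k' \leq k$. I would iterate over $k' \in \{0,1,\ldots,k\}$ with $n-k'$ even, and call Corollary~\ref{cor:mainfpt} on $(G, 0, (n-k')/2, k')$, again with only $\Oh(k)$ calls.

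Finally, for \textsc{Distance-$k$-to-Degree-One Square Root}, as already noted in the introduction, I would iterate over all pairs $(p,q)$ of nonnegative integers with $p+2q = n-k$ and invoke Corollary~\ref{cor:mainfpt} on $(G,p,q,k)$; since $p \leq n$ and $q \leq n/2$, there are $\Oh(n)$ such pairs, and $H-S$ has dissociation number exactly $n-k$ in some such decomposition if and only if at least one call returns \yes. In all three cases the number of subcalls is bounded by $n^{\Oh(1)}$ and each subcall runs in time $2^{2^{\Oh(k)}}\cdot n^{\Oh(1)}$, yielding the claimed bound. There is no real obstacle here beyond noting that the decompositions into $pK_1 + qK_2$ corresponding to each of the three problems are captured by an appropriate polynomial-sized family of parameter choices for \dpqroot{}.
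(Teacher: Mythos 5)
Your proposal is correct and follows essentially the same route the paper takes: the paper treats Corollary~\ref{cor:VC} as immediate from Corollary~\ref{cor:mainfpt} together with the observations made when \dpqroot{} is introduced (namely that $q=0$ gives \sroot{}, $p=0$ gives \textsc{Distance-$k$-to-Matching Square Root}, and iterating over all $p,q$ with $p+2q=n-k$ gives \textsc{Distance-$k$-to-Degree-One Square Root}). One minor remark: for \sroot{} your loop over $k'\in\{0,\dots,k\}$ is harmless but unnecessary, since any vertex cover of size $k'<k$ can be padded to one of size exactly $k$ (a superset of a vertex cover is a vertex cover), so the single call $(G,\,n-k,\,0,\,k)$ already suffices; the iteration is genuinely needed only in the matching case, where padding the modulator can destroy the perfect-matching structure of $H-S$.
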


\section{A lower bound for \dpqroot}
In this section, we show that the running time of our algorithm for \dpqroot{} given in Section~\ref{sec:main} (see Theorem~\ref{thm:mainfpt}) cannot be significantly improved.
In fact, we show that the \dpqroot{} problem admits a double-exponential lower bound, even for the special case $q=0$, that is, in the case of \sroot.


To provide a lower bound for the \sroot{} problem, we will give a parameterized reduction from the \bccover{} problem. This problem takes as input a bipartite graph $G$ and a nonnegative integer $k$, and the task is to decide whether the edges of~$G$ can be covered by at most $k$ complete bipartite subgraphs. Chandran et al.~\cite{CIK16} showed the following two results about the \bccover{} problem that will be of interest to us.

\begin{theorem}[\normalfont \cite{CIK16}]\label{theo:biclique}
\bccover~cannot be solved in time $2^{2^{o(k)}}\cdot n^{\Oh(1)}$ unless ETH is false.
\end{theorem}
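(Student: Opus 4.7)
The plan is to establish a parameterized reduction from a problem already known to resist subexponential algorithms under ETH (most naturally $3$-\textsc{SAT} on $n$ variables) to \bccover{}, with the crucial property that the biclique cover parameter in the output is only $k = \Oh(\log n)$ while the size of the produced graph remains polynomial in $n$. If such a reduction exists, then a hypothetical $2^{2^{o(k)}}\cdot N^{\Oh(1)}$ algorithm for \bccover{} would translate into a $2^{2^{o(\log n)}}\cdot \mathrm{poly}(n) = 2^{n^{o(1)}}$ algorithm for $3$-\textsc{SAT}, contradicting ETH.

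The conceptual ingredient driving the logarithmic blow-up is that any biclique cover of size $k$ induces a natural classification of vertices by their \emph{membership profile} in $\{0,1\}^k$: vertex $v$ is assigned the characteristic vector indicating which of the $k$ bicliques contain it. Two vertices with identical profiles are indistinguishable in the cover, so the ``effective'' graph has at most $2^k$ vertex types on each side. Equivalently, the biclique cover number of a bipartite graph equals the Boolean rank of its bi-adjacency matrix. This single fact explains both why the trivial algorithm uses doubly-exponential time (one enumerates assignments of profiles, giving $2^{2^{\Oh(k)}}$ candidates) and why an encoding with exponentially many ``meaningful'' constraints should be realizable using only $\Oh(\log n)$ bicliques.

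Concretely, I would design the reduction so that a binary encoding of a truth assignment to the $n$ variables is carried by the profile structure: the $k = \Oh(\log n)$ bicliques correspond to coordinate positions, and the vertices of the bipartite graph come in families whose required profiles encode variables, literals, and clauses. The edges to be covered are arranged so that any admissible cover of size $k$ must assign, to each variable-vertex, one of two canonical profiles representing \emph{true} or \emph{false}, and so that the cover extends to the clause-vertices iff each clause is satisfied. One then verifies both directions: a satisfying assignment gives a cover by prescribing profiles explicitly; conversely, any size-$k$ cover must conform to the intended profile dichotomy (after normalizing duplicate bicliques).

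The main obstacle is precisely this rigidity: forcing every minimum cover of the reduction output to correspond to a genuine assignment, rather than to some unintended combinatorial alternative that exploits the profile freedom. This is usually addressed by attaching ``enforcement gadgets'' (often large independent sets or carefully chosen sparse configurations, sometimes based on identifying codes or perfect hash families) whose edges can only be covered economically when the profiles of the variable-vertices are of the prescribed form. Getting the numerology right — keeping the gadget sizes polynomial in $n$ while holding $k$ down to $\Oh(\log n)$ — is the delicate step, and is where one leans on explicit combinatorial objects such as codes over $\{0,1\}^k$ with prescribed intersection properties.
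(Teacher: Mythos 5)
This theorem is not proved in the paper at all: it is imported verbatim from Chandran et al.~\cite{CIK16}, so the only honest comparison is against that reference. Your high-level plan does match the known strategy for such double-exponential lower bounds: reduce from $3$-\textsc{SAT} (or another ETH-hard problem) so that the output parameter is $k=\Oh(\log n)$ while the instance stays polynomial, and exploit the fact that a size-$k$ biclique cover classifies vertices into at most $2^k$ membership profiles (equivalently, that the biclique cover number is the Boolean rank of the bi-adjacency matrix). Those observations are correct and are indeed the conceptual engine behind the result.

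However, what you have written is a research plan, not a proof, and the gap is exactly where you yourself locate it. The entire content of the theorem is the \emph{existence} of a concrete reduction in which (a) the constructed bipartite graph has polynomial size, (b) the target cover number is $\Oh(\log n)$, and (c) \emph{every} cover of that size — not just the intended ones — decodes to a satisfying assignment. You describe properties (a)--(c) as desiderata and then defer the construction of the enforcement gadgets (``identifying codes or perfect hash families,'' ``getting the numerology right'') that would make (c) true. The soundness direction is the hard one: with only $\Oh(\log n)$ bicliques there is enormous freedom in how an adversarial cover assigns profiles, and ruling out all unintended covers is precisely the technical work of~\cite{CIK16}. Without an explicit gadget construction and a proof of the reverse implication, the argument does not establish the theorem; as it stands it reduces the claim to another unproved claim of essentially the same difficulty.
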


\begin{theorem}[\normalfont \cite{CIK16}]\label{theo:bicliquekernel}
\bccover~does not admit a kernel of size $2^{o(k)}$ unless $\P=\NP$.
\end{theorem}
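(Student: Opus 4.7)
The plan is to establish the kernel lower bound via a polynomial-parameter transformation from a canonical $\NP$-hard problem into \bccover{} in which the target parameter grows only logarithmically with the source instance size. The combinatorial leverage that makes this possible is that a biclique cover with $k$ bicliques assigns each vertex a label in $\{0,1\}^k$, namely its incidence vector across the bicliques, and therefore $k$ bicliques can distinguish at most $2^k$ vertex types. Thus a logarithmic parameter carries enough information in principle to encode an otherwise polynomial-size instance.

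Concretely, I would start from 3-SAT on $n$ variables and $m$ clauses and construct, in polynomial time, a bipartite graph $G$ of size $(n+m)^{\Oh(1)}$ together with an integer $k = \Oh(\log(n+m))$ such that $G$ admits a biclique cover of size $k$ if and only if the 3-SAT instance is satisfiable. The two sides of $G$ would be indexed by bit-tuples of length $\Oh(\log(n+m))$, and variable/clause incidences would be encoded so that a single biclique simultaneously covers many edges via a product of bit-constraints. The correctness proof would verify that any satisfying assignment yields $k$ natural bicliques, and conversely that any $k$-biclique cover can be decoded into a satisfying assignment.

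Assume towards contradiction that \bccover{} admits a kernel of size $s(k) = 2^{o(k)}$. Composing the reduction with the kernelization yields a polynomial-time map from a 3-SAT instance of size $N = (n+m)^{\Oh(1)}$ to an equivalent \bccover{} instance of size $s(\Oh(\log N)) = 2^{o(\log N)} = N^{o(1)}$. Combined with an amplification/self-reduction argument (for example, padding the 3-SAT instance and iterating the compression so that the parameter $k$ shrinks well below the original input size), this produces a polynomial-time decision procedure for 3-SAT, contradicting $\P \neq \NP$.

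The main obstacle is the construction of the logarithmic-parameter reduction: most natural reductions to \bccover{} introduce one biclique per variable or per clause, producing a parameter polynomial in the input size. Beating this requires a gadget that exploits the full product structure of a biclique, so that $k$ bicliques jointly encode $2^{\Theta(k)}$ distinct vertex types and the corresponding exponential number of distinguishable edge patterns. A secondary subtlety is turning a sub-polynomial kernel into an actual polynomial-time algorithm for an $\NP$-hard problem; this step cannot rely on naive brute force over the kernel and is where the iterated composition (or a tailored self-reduction for \bccover{}) is essential.
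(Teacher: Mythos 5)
This theorem is not proved in the paper at all: it is imported verbatim from Chandran, Issac and Karrenbauer~\cite{CIK16}, so the only thing to compare your proposal against is their argument. Your high-level strategy --- a polynomial-time reduction from 3-SAT to \bccover{} in which the parameter is only $k=\Oh(\log(n+m))$, followed by the observation that a $2^{o(k)}$ kernel would then compress 3-SAT to subpolynomial size --- is indeed the strategy behind~\cite{CIK16}, and the information-theoretic intuition you give (a cover by $k$ bicliques assigns each vertex one of at most $2^k$ incidence vectors, so a logarithmic parameter can in principle carry the whole instance) is the right one.

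However, as written the proposal proves nothing, because the entire technical content of the theorem is the construction you defer: you assert that a bipartite graph $G$ with $k=\Oh(\log(n+m))$ exists such that $G$ has a biclique cover of size $k$ if and only if the formula is satisfiable, but you give no gadget, no encoding of variables and clauses, and no argument for either direction of the equivalence. This is not a routine step one may wave at; it is exactly the part of~\cite{CIK16} that requires work --- making $k$ bicliques simultaneously verify a satisfying assignment while keeping the graph polynomial-size is delicate, and one must in particular rule out ``cheating'' covers of size $k$ that do not decode to assignments. A secondary, fixable gap is the last step: a kernel of size $N^{o(1)}$ does not yield a polynomial-time algorithm by brute force, and ``padding and iterating the compression'' is not the clean way to close the loop. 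The standard argument --- the same one this paper uses for its own kernel lower bound for \sroot{} --- is that \bccover{} is in \NP, so the kernelized instance can be mapped back in polynomial time to an equivalent 3-SAT instance of size $N^{o(1)}$; iterating this shrinking self-reduction $\Oh(\log\log N)$ times (handling the non-uniformity of the $o(1)$ by brute-forcing instances below a fixed constant threshold) decides 3-SAT in polynomial time, giving $\P=\NP$.
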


\begin{lemma}\label{lem:reduction}
There exists a polynomial time algorithm that, given an instance $(B,k)$ for \bccover, produces an equivalent instance $(G,k+4)$ for \sroot, with $|V(G)|=|V(B)|+k+6$.
\end{lemma}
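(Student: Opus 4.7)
The plan is to construct $G$ by augmenting $B$ with $k+6$ new ``gadget'' vertices, together with carefully chosen edges, so that a biclique cover of $B$ of size $k$ translates directly into a square root of $G$ with a vertex cover of size $k+4$, and conversely. Let $A$ and $C$ denote the parts of $B$. The vertex set of $G$ will consist of $V(B)$ together with $k$ ``biclique label'' vertices $v_1,\dots,v_k$ and a fixed gadget $\Gamma$ of $6$ additional vertices, giving $|V(G)|=|V(B)|+k+6$ as required. The intended square root $H$ makes each $v_i$ adjacent in $H$ to the vertex set of the $i$-th biclique, places $4$ specified vertices of $\Gamma$ (together with $v_1,\dots,v_k$) into the vertex cover, and leaves the remaining $2$ gadget vertices outside it. The edges of $G$ are then declared to be exactly those of $H^2$ for this template, augmented with the edges of $B$ between $A$ and $C$.

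\textbf{Forward direction.} Given a biclique cover $B_1,\dots,B_k$ of $B$ with $B_i=(A_i,C_i)$, define $H$ by making each $v_i$ adjacent in $H$ to $A_i\cup C_i$, together with the fixed template of edges within and incident to $\Gamma$. Every vertex of $A\cup C$ has all its $H$-neighbors inside $S=\{v_1,\dots,v_k\}\cup\Gamma_{\mathrm{cover}}$, so $V(H)\setminus S$ is independent, witnessing that $S$ is a vertex cover of $H$ of size $k+4$. One then checks that $H^2=G$: every edge $uw\in E(B)$ with $u\in A$, $w\in C$ is produced because $u,w\in N_H(v_i)$ for the index $i$ whose biclique covers $uw$, and all remaining edges of $G$ are produced by the rigid substructure on $\Gamma$.

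\textbf{Backward direction.} Suppose $H$ is any square root of $G$ with vertex cover $S$, $|S|=k+4$. The gadget $\Gamma$ is arranged (via twin/pendant-style substructures) so that a prescribed set of $4$ of its vertices is forced into $S$ in every such $H$. Removing these $4$ leaves at most $k$ further vertices of $S$, which play the role of biclique labels. Since $V(H)\setminus S$ is independent, each edge of $B$ with endpoints $u\in A$, $w\in C$ must arise as a distance-two path through some common $H$-neighbor in $S$; the gadget is chosen so that no gadget vertex has neighbors in both $A$ and $C$, hence every such edge is witnessed by one of the $k$ label vertices $v$. Setting $A_v=A\cap N_H(v)$ and $C_v=C\cap N_H(v)$ for each such $v$ yields bicliques whose union covers $E(B)$; the construction is polynomial-time and the size bound is immediate.

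The principal difficulty is engineering the $6$-vertex gadget $\Gamma$ so that it simultaneously (i) rigidly forces exactly $4$ specific vertices of $\Gamma$ into the vertex cover of every square root of $G$ of size $k+4$ (ruling out alternative covers in which some $v_i$ is outside $S$ while extra vertices of $V(B)$ are inside), and (ii) introduces no edge of $G$ between $A$ and $C$ beyond those of $E(B)$, so that the recovered biclique cover is not polluted. The structural equivalences from Section~\ref{sec:fptdonesroot} (in particular Lemmas~\ref{lem:bigQset}--\ref{lem:truetwins}) already indicate which features of $H$ are detectable from $G$, and guide the choice of gadget. Once the construction is in place, Theorems~\ref{theo:biclique} and~\ref{theo:bicliquekernel}, combined with $|V(G)|=|V(B)|+k+6$, will yield the ETH lower bound and the no-subexponential-kernel result claimed in the introduction.
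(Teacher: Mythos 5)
Your proposal has the right overall shape---augment $B$ with $k$ ``label'' vertices and a constant-size rigid gadget, so that bicliques correspond to the label vertices' neighborhoods in a square root---but as written it is a proof plan rather than a proof, and the part you defer is exactly where all the content of the lemma lives. You never specify the gadget $\Gamma$ or the edge set of $G$, and you acknowledge this yourself in the final paragraph. Without the concrete construction, none of the verification steps can be carried out: in particular, the backward direction needs (a) a reason why an edge $x y\in E(B)$ cannot simply be an edge of $H$ (otherwise it is witnessed by no label vertex and escapes the recovered cover), and (b) a reason why its distance-two witness must be one of the $k$ label vertices rather than some other vertex. Your sketch addresses neither: you assert that witnesses lie in $S$ because $V(H)\setminus S$ is independent, but a middle vertex $m$ of a $P_3$ from $x$ to $y$ need not be in $S$ when both $x$ and $y$ are in $S$, and you give no argument ruling out $xy\in E(H)$ at all.

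It is also worth noting that the route you plan for the converse---engineering $\Gamma$ so that four of its vertices are \emph{forced into the vertex cover}, and then arguing about which $k$ vertices remain in $S$---is harder than what the paper does and is where your ``principal difficulty'' lies. The paper's construction (two clique blocks $X\cup Z\cup\{u\}$ and $Y\cup Z\cup\{u'\}$ glued along $Z$, plus pendant-like triples $u,v,w$ and $u',v',w'$) avoids reasoning about the cover entirely in the backward direction: it proves that certain edges ($uv$, $vw$, $ux_i$, $u'y_j$) belong to \emph{every} square root and that the edges $x_iy_j$ belong to none, and then uses the fact that $N_G(x_i)\cap N_G(y_j)=Z$ with $|Z|=k$ \emph{by construction}, so the $k$ bicliques $N_H(z_\ell)\cap(X\cup Y)$ are recovered from an arbitrary square root, with no need to locate the vertex cover. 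To complete your argument you would need either to supply a gadget with the cover-forcing property you describe and close the two gaps above, or to redesign the construction so that, as in the paper, the witnesses for $B$-edges are pinned down by common neighborhoods in $G$ rather than by membership in $S$.
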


\begin{proof}
Let $(B,k)$ be an instance of \bccover~where $(X,Y)$ is the bipartition of $V(B)$. Let $X=\{x_1,\ldots,x_p\}$ and $Y=\{y_1,\ldots,y_q\}$. We construct the instance $(G,k+4)$ for \sroot~such that $V(G)=X\cup Y\cup\{z_1,\ldots,z_k\}\cup\{u,v,w,u',v',w'\}$. Denote by $Z$ the set $\{z_1,\ldots,z_k\}$. The edge set of $G$ is defined in the following way: $G[X\cup Z\cup \{u\}]$, $G[X\cup \{v\}]$, $\{u,v,w\}$, $G[Y\cup Z\cup \{u'\}]$, $G[Y\cup \{v'\}]$ and $\{u',v',w'\}$ are cliques and $x_iy_j\in E(G)$ if and only if $x_iy_j\in E(B)$. The construction of $G$ is shown in Figure~\ref{fig:lower}.

\begin{figure}[ht]
\centering
\includegraphics[scale= 1.00]{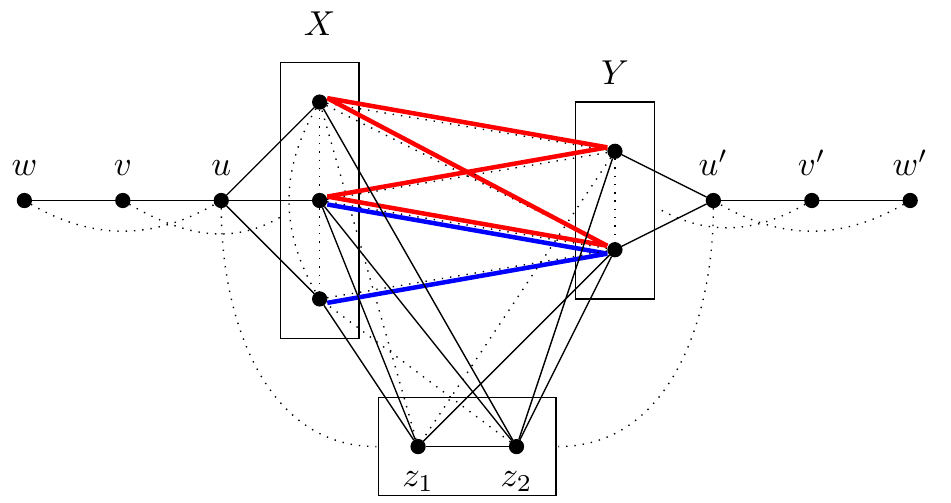}
\caption{Illustrating the graphs $G$ and $H$ considered in the proof of Lemma~\ref{lem:reduction}.
The sets $X$ and $Y$ form the bipartition of an instance of \bccover~and the two colored completed bipartite subgraphs correspond to a solution of the problem, where $k=2$.
The constructed graph $G$ of \sroot~ is depicted by the solid and dotted black edges, whereas
the graph spanned by the solid black edges corresponds to the square root $H$ of $G$.}
\label{fig:lower}
\end{figure}

For the forward direction, suppose $(B,k)$ is a \yes-instance for \bccover. We will show that $(G,k+4)$ is a \yes-instance for \sroot. Note that if $B$ has a biclique cover of size strictly less than $k$, we can add arbitrary bicliques to this cover and obtain a biclique cover for $B$ of size exactly $k$. Let $\mathcal{C}=\{C_1,\ldots,C_k\}$ be such a biclique cover. We construct the following square root candidate $H$ for $G$ with $V(H)=V(G)$. Add the edges $uv$, $vw$, $u'v'$ and $v'w'$ to $H$, and also all the edges between $u$ and $X$, all the edges between $u'$ and $Y$ and all the edges in $G[Z]$. Finally, for each $1\leq i\leq k$, add to $H$ all the edges between $z_i$ and the  vertices of $C_i$.

\begin{claim}\label{claim:srH}
The constructed graph $H$ is indeed a square root of $G$.
\end{claim}
\begin{claimproof}
Let $xy\in E(G)$ be such that $xy\notin E(H)$. If $xy=uw$, note that $uv,vw\in E(H)$. If $xy=vx_i$ for some $i$, note that $uv,ux_i\in E(G)$. If $xy=x_ix_j$, then $ux_i,ux_j\in E(H)$. If $xy=uz_i$, let $x_j$ be a vertex of $C_i$ and note that $ux_j,x_jz_i\in E(H)$. If $xy=x_jz_i$, let $\ell$ be such that $x_j\in C_\ell$ and observe that $x_jz_\ell,z_\ell z_i\in E(H)$. Symmetric arguments apply for the edges in $G[Y\cup Z\cup\{u',v',w'\}]$. Finally, if $xy=x_iy_j$, let $C_\ell$ be the biclique containing the edge $x_iy_j$ in $\mathcal{C}$ and note that $x_iz_\ell,y_jz_\ell \in E(H)$.
\end{claimproof}

We conclude that $(G,k+4)$ is a \yes-instance for \sroot~by Claim~\ref{claim:srH} together with the fact that $Z\cup\{u,v,u',v'\}$ is a vertex cover of $H$ of size $k+4$.

Before we show the reverse direction of the theorem, we state the next three claims, that concern the structure of any square root of the graph $G$.

\begin{claim}\label{claim:uvvw}
The edges $uv$, $vw$, $u'v'$ and $v'w'$ belong to any square root of $G$.
\end{claim}

\begin{claimproof}
Suppose for a contradiction that the graph $G$ has a square root $H$ such that $vw\notin E(H)$. In this case, it holds that $vu,uw\in E(H)$, since $u$ is the only common neighbor of $v$ and $w$. However, since $wx_i\notin E(G)$ for every $1\leq i\leq p$ and $wz_j\notin E(G)$ for every $1\leq j\leq k$, then $ux_i,uz_j\notin E(H)$. Therefore, there must exist an induced $P_3$ in $H$ with endpoints, for instance, $u$ and $z_\ell$, for some $\ell$. However, since $ux_i,uz_j\notin E(H)$ for every $1\leq i\leq p$ and every $1\leq j\leq k$ and $N_G(u)=X\cup Z\cup\{v,w\}$, either $v$ or $w$ have to be the middle vertex of the $P_3$. This is a contradiction, since $vz_\ell,wz_\ell\notin E(G)$.

Now suppose for a contradiction that the graph $G$ has a square root $H$ such that $uv\notin E(H)$. If there exists $\ell$ such that $ux_\ell,vx_\ell\in E(H)$, then we have a contradiction, since this would imply that no edge incident to $w$ can be in $H$, given that $wx_\ell\notin E(G)$. We can then conclude that $vw,uw\in E(H)$. We can now reach a contradiction by the same argument as used in the previous paragraph.

The claim follows by a symmetric argument for the edges $u'v'$ and $v'w'$.
\end{claimproof}

\begin{claim}\label{claim:uxi}
The edges $\{ux_i,u'y_j~|~1\leq i\leq p,1\leq j\leq q\}$ belong to any square root of $G$.
\end{claim}

\begin{claimproof}
Suppose for a contradiction that $G$ has a square root $H$ such that $ux_i\notin E(H)$ for some $1\leq i\leq p$. By Claim~\ref{claim:uvvw}, $uv,vw\in E(H)$. This implies that $vx_i\notin E(H)$, since $wx_i\notin E(G)$. Since $ux_i\notin E(H)$ by assumption, there must exist $j$ such that $x_j$ is the middle vertex of a $P_3$ in $H$ with endpoints $v$ and $x_i$. However, this is a contradiction, since $wx_j\notin E(G)$. The claim follows by a symmetric argument for the edges of the form $u'y_j$.
\end{claimproof}

\begin{claim}\label{claim:xiyj}
The edges $\{x_iy_j~|~1\leq i\leq p,1\leq j\leq q\}$ do not belong to any square root of $G$.
\end{claim}

\begin{claimproof}
Suppose for a contradiction that $G$ has a square root $H$ such that $x_iy_j\in E(H)$ for some $1\leq i\leq p$ and $1\leq j\leq q$. By Claim~\ref{claim:uxi}, we have that $ux_i\in E(H)$, which is a contradiction since $uy_j\notin E(G)$.
\end{claimproof}

Now, for the reverse direction of the theorem, assume that $G$ has a square root $H$ that has a vertex cover of size at most $k+4$. By Claim~\ref{claim:xiyj}, for every edge of $G$ of the form $x_iy_j$, it holds that $x_iy_j\notin E(H)$. This implies that, for every such edge, there exists an induced $P_3$ in $H$ having $x_i$ and $y_j$ as its endpoints. Since $N_G(x_i)\cap N_G(y_j)=Z$, only vertices of $Z$ can be the middle vertices of these paths. For $1\leq \ell\leq k$, let $C_\ell=N_H(z_\ell)\cap (X\cup Y)$. We will now show that $\mathcal{C}=\{C_1,\ldots,C_k\}$ is a biclique cover of $B$. First, note that since for every edge $x_iy_j$, there exists $z_h\in Z$ such that $z_hx_i,z_hy_j\in E(H)$, we conclude that $x_iy_j\in C_h$, which implies that $\mathcal{C}$ is an edge cover of $B$. Furthermore, for a given $\ell$, since every vertex of $C_\ell$ is adjacent to $z_\ell$ in $H$, $G[C_\ell]$ is a clique and, therefore, $B[C_\ell]$ is a biclique. This implies that $\mathcal{C}$ is indeed a biclique cover of $B$ of size $k$, which concludes the proof of the theorem.
\end{proof}

From Theorem~\ref{theo:biclique}  and Lemma~\ref{lem:reduction} we obtain the following theorems.

\begin{theorem}
 \sroot~cannot be solved in time $2^{2^{o(k)}}\cdot n^{\Oh(1)}$ unless ETH is false.
\end{theorem}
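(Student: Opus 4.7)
The plan is to derive the bound by combining Lemma~\ref{lem:reduction} with the ETH-based lower bound for \bccover{} stated in Theorem~\ref{theo:biclique}, via a standard contrapositive argument. Assume, for the sake of contradiction, that there exists an algorithm $\mathcal{A}$ solving \sroot{} in time $2^{2^{o(k)}}\cdot n^{\Oh(1)}$. I will use $\mathcal{A}$ as a subroutine to build an algorithm for \bccover{} whose running time contradicts Theorem~\ref{theo:biclique}.

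Given an arbitrary instance $(B,k)$ of \bccover{}, I would first invoke Lemma~\ref{lem:reduction} to produce, in polynomial time, an equivalent instance $(G,k')$ of \sroot{} with $k'=k+4$ and $|V(G)|=|V(B)|+k+6$. I would then run $\mathcal{A}$ on $(G,k')$ and output its answer. Correctness is immediate from the equivalence guaranteed by the lemma, so the only thing to verify is the running time.

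For the running time, the reduction step itself is polynomial in $|V(B)|+k$, and the call to $\mathcal{A}$ takes time $2^{2^{o(k')}}\cdot |V(G)|^{\Oh(1)}$. Since $k'=k+4$, we have $2^{o(k')}=2^{o(k)}$, and since $|V(G)|$ is polynomial in the size of the input instance $(B,k)$, the overall running time is bounded by $2^{2^{o(k)}}\cdot n^{\Oh(1)}$, where $n=|V(B)|$. This directly contradicts Theorem~\ref{theo:biclique}, finishing the proof.

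There is no real obstacle here beyond carefully checking that $o(k+4)=o(k)$ and that the polynomial blow-up in the input size from the reduction is absorbed by the $n^{\Oh(1)}$ factor; both are routine. The substance of the argument rests entirely on the earlier reduction (Lemma~\ref{lem:reduction}) and the established lower bound for \bccover{} (Theorem~\ref{theo:biclique}), so the theorem falls out as a direct corollary of these two results.
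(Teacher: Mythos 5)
Your argument is exactly the paper's: it derives the theorem as an immediate consequence of Lemma~\ref{lem:reduction} together with Theorem~\ref{theo:biclique}, via the standard contrapositive, and the bookkeeping you check ($k'=k+4$ so $o(k')=o(k)$, and the polynomial blow-up being absorbed) is precisely what makes the corollary go through. Correct, and the same route as the paper.
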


Moreover, from Theorem~\ref{theo:bicliquekernel} and Lemma~\ref{lem:reduction} we can also conclude the following corollary.

\begin{theorem}
\sroot~does not admit a kernel of size $2^{o(k)}$ unless $\P=\NP$.
\end{theorem}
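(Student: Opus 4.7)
The plan is to derive this lower bound from Theorem~\ref{theo:bicliquekernel} by composing the reduction of Lemma~\ref{lem:reduction} with a hypothetical small kernel for \sroot{}. Suppose, for the sake of contradiction, that \sroot{} admits a kernelization algorithm that produces instances of bit-size at most $2^{o(k)}$. I will argue that this yields a kernel of size $2^{o(k)}$ for \bccover{}, contradicting Theorem~\ref{theo:bicliquekernel}.

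Given an arbitrary instance $(B,k)$ of \bccover{}, the first step is to apply the polynomial-time reduction from Lemma~\ref{lem:reduction} to produce an equivalent instance $(G,k+4)$ of \sroot{} with $|V(G)| = |V(B)| + k + 6$. The second step is to feed $(G, k+4)$ into the assumed \sroot{} kernelization, obtaining an equivalent \sroot{} instance $(G', k')$ of total size $|G'| + k' \le 2^{o(k+4)} = 2^{o(k)}$. So far, however, the output is an \sroot{} instance rather than a \bccover{} instance, so we have only a \emph{compression} of \bccover{} into \sroot{}.

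To convert this compression into an actual kernel for \bccover{}, I will invoke the NP-hardness of \bccover{} (which underlies Theorem~\ref{theo:bicliquekernel} and is also implicit in Theorem~\ref{theo:biclique}) together with the fact that \sroot{} lies in NP. These two facts imply the existence of a classical polynomial-time Karp reduction from \sroot{} to \bccover{}. Applying this reduction to $(G', k')$ produces, in time polynomial in $|G'| + k'$, an equivalent \bccover{} instance $(B', k'')$ whose bit-size is bounded by a polynomial in $|G'| + k' = 2^{o(k)}$, and hence is still $2^{o(k)}$. Since $k'' \le |B'|$ trivially, the total size $|B'| + k''$ is $2^{o(k)}$ as well.

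Composing the three polynomial-time steps gives a polynomial-time algorithm that maps any \bccover{} instance $(B,k)$ to an equivalent \bccover{} instance $(B', k'')$ with $|B'| + k'' = 2^{o(k)}$, which is a kernelization of size $2^{o(k)}$ for \bccover{}. This contradicts Theorem~\ref{theo:bicliquekernel} unless $\P = \NP$. The main thing to verify carefully is that the polynomial blow-up in the back-reduction from \sroot{} to \bccover{} does not push the size past the subexponential threshold, but since $\text{poly}(2^{o(k)}) = 2^{o(k)}$, this is automatic; everything else is bookkeeping of parameter values through the three reductions.
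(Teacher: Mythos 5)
Your argument is correct and is essentially identical to the paper's own proof: both compose the reduction of Lemma~\ref{lem:reduction} with the hypothetical subexponential kernel for \sroot{} and then use the fact that \sroot{} is in \NP{} and \bccover{} is \NP-complete to Karp-reduce back, obtaining a $\mathrm{poly}(2^{o(k)})=2^{o(k)}$ kernel for \bccover{} and contradicting Theorem~\ref{theo:bicliquekernel}. The paper additionally remarks, as an alternative, that Chandran et al.\ in fact rule out any subexponential compression of \bccover{} into an \NP{} problem, which would let one skip the back-reduction step, but your version is equally valid.
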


\begin{proof}
Assume that \sroot~has a kernel of size $2^{o(k)}$. Since \sroot{} is in \NP{} and \bccover{}  is \NP-complete, there is an algorithm $\mathcal{A}$ that in time $\Oh(n^c)$ reduces  \sroot{} to \bccover{}, where $c$ is a positive constant. Then combining the reduction from Lemma~\ref{lem:reduction}, the kernelization  algorithm for \sroot{} and $\mathcal{A}$, we obtain a kernel for \bccover{}  of size $(2^{o(k)})^c$ that is subexponential in $k$. By   Theorem~\ref{theo:bicliquekernel}, this is impossible unless $\P=\NP$.
Equivalently, we can observe that Chandran et al.~\cite{CIK16}, in fact, proved a stronger claim. Their proof shows that \bccover~does not admit a \emph{compression} (we refer to~\cite{CyganFKLMPPS15} for the definition of the notion) of subexponential in $k$ size to any problem in \NP.
\end{proof}

\section{\cliqueroot}

In this section, we consider the complexity of testing whether a graph admits a square root of bounded deletion distance to a clique. More formally, we consider the following problem:

\problemdef
	{\cliqueroot}
	{A graph $G$ and nonnegative integer $k$.}
	{Decide whether there is a square root $H$ of $G$ such that $H - S$ is a complete graph for a set $S$ on $k$ vertices.}

We give an algorithm running in \FPT-time parameterized by $k$, the size of the deletion set. That is, we prove the following theorem.

\begin{theorem}\label{thm:cliqueroot}
\cliqueroot~can be solved in time $2^{2^{\Oh(k)}} \cdot n^{\Oh(1)}$.
\end{theorem}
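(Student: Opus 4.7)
The plan is to follow the framework of the proof of Theorem~\ref{thm:mainfpt}, which turns out to be considerably simpler here because $H-S$ has only one ``flavour'' of vertex (a member of a single clique) rather than the three edge-types plus isolated vertices we had to track in the $pK_1+qK_2$ setting. Consequently, the type-reducing rules analogous to Rules~\ref{rule:type1} and~\ref{rule:type2} become unnecessary; a twin-class bound together with a skeleton-plus-ILP step suffice.

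First I would bound the number of true twin classes of $G$. Suppose $H$ is a solution with modulator $S$ and let $u,v\in V(H)\setminus S$ satisfy $N_H(u)\cap S=N_H(v)\cap S$. Since $V(H)\setminus S$ is a clique in $H$, a direct calculation gives $N_G[u]=(V(G)\setminus S)\cup N_H[N_H(u)\cap S]=N_G[v]$, so $u$ and $v$ are true twins in $G$. There are at most $2^k$ possible $S$-neighborhoods and at most $k$ extra vertices in $S$, so in a \yes-instance the prime-twin graph $\mathcal{G}$ has at most $k+2^k$ vertices; otherwise we output \no{} directly.

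Next I would enumerate potential solution skeletons, mirroring the construction in Section~\ref{sec:main}. A skeleton is a graph $F$ on $S\cup X$ where $|S|=k$, $X\subseteq 2^S$ encodes the $S$-neighborhoods that actually occur, $F[S]$ is arbitrary, $F[X]$ is a clique, and each $A\in X$ is joined precisely to the vertices of $A$ in $S$. The number of skeletons is at most $2^{\binom{k}{2}}\cdot 2^{2^k}=2^{2^{\Oh(k)}}$, and each has at most $k+2^k=2^{\Oh(k)}$ vertices. For every pair $(F,S)$ I would enumerate all $|\mathcal{T}|^{|V(F)|}=2^{2^{\Oh(k)}}$ mappings $\varphi\colon V(F)\to\mathcal{T}$, retain only the $\mathcal{G}$-compatible surjections, and for each such $\varphi$ solve the ILP with variables $x_v$ ($v\in V(F)$) given by $x_v=1$ for $v\in S$, $x_v\ge 1$ for $v\in X$, and $\sum_{v\in\varphi^{-1}(T_i)}x_v=|T_i|$ for each $T_i\in\mathcal{T}$. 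Since the ILP has $2^{\Oh(k)}$ variables, Theorem~\ref{thm:ILP} solves it in $2^{2^{\Oh(k)}}\log n$ time, giving the claimed total running time.

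The correctness argument is the analogue of Claims~\ref{claim:comp-skeleton} and~\ref{claim:system}. Forward direction: collapsing every class of vertices of $V(H)\setminus S$ sharing an $S$-neighborhood into a single representative yields a skeleton, and the multiplicities together with the natural map into the true twin classes of $G$ furnish a feasible ILP solution. Reverse direction: duplicating each type $A\in X$ into $x_A$ pairwise-adjacent copies that all have $S$-neighborhood $A$, and keeping $F[S]$ as is, produces a graph $\hat F$ whose square equals $G$ up to an isomorphism respecting the twin classes, since $\mathcal{G}$-compatibility forces the correct adjacencies between distinct twin classes, the cliqueness of $\hat F-S$ covers all adjacencies inside $V(G)\setminus S$, and the fixed structure of $F[S]$ together with the chosen $S$-neighborhoods recover adjacencies involving $S$. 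I expect the only real point to verify carefully is that vertices sharing a type are genuinely true twins in $\hat F^2$---which is exactly what justifies collapsing them in the skeleton---and this follows cleanly from the clique structure of $\hat F-S$, because the distance-two closure of a non-$S$ vertex depends only on its $S$-neighborhood and on $F[S]$.
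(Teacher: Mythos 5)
Your proof is correct in its essentials, but it takes a genuinely different and substantially heavier route than the paper's. The paper's proof of Theorem~\ref{thm:cliqueroot} is a short kernelization argument: it uses exactly your twin-class observation to bound the number of true twin classes by $2^k+k$, then applies a safe reduction rule (Rule~\ref{rule:ttclique}) that shrinks any twin class of size at least $2^k+k+1$ by deleting a vertex, and finally brute-forces the resulting kernel on at most $(2^k+k)^2 = 2^{\Oh(k)}$ vertices. You instead re-instantiate the full machinery of Theorem~\ref{thm:mainfpt}: enumerate potential solution skeletons, enumerate $\mathcal{G}$-compatible surjections onto the twin classes, and solve an ILP per surjection. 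Both routes give the claimed $2^{2^{\Oh(k)}}\cdot n^{\Oh(1)}$ bound, so your approach is not wrong, but the kernel argument is considerably simpler and is the one the paper uses; the skeleton/ILP framework is only really needed in the $pK_1+qK_2$ setting where the three edge types and the $S$-isolated vertices make a direct kernel argument awkward.

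One small imprecision worth flagging: your formula $N_G[u]=(V(G)\setminus S)\cup N_H[N_H(u)\cap S]$ omits the contribution $\bigcup_{c\in C\setminus\{u\}}(N_H(c)\cap S)$ coming from other clique vertices' $S$-neighbors. It so happens that this extra set is the same for any two $u,v\in C$ with $N_H(u)\cap S=N_H(v)\cap S$, so the conclusion that $u$ and $v$ are true twins in $G$ still holds; but the displayed equality as written is not correct, and you should either add that term or simply argue, as the paper does, that $N_G[u]$ and $N_G[v]$ agree term by term.

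Finally, if you do pursue the skeleton route, you should state explicitly that the $\mathcal{G}$-compatibility condition forces $\varphi$ to be injective on $X$ (since $X$ is a clique in $F$, any two vertices of $X$ are adjacent in $F^2$, hence must map to adjacent, in particular distinct, twin classes). This is what ultimately makes the ``collapsing types into single representatives'' reversible; it is implicit in your last paragraph but deserves to be said, since it is the one place where the clique structure of $\hat F-S$ actually enters the correctness argument rather than just the counting.
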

\begin{proof}

Let $(G,k)$ be an instance to $\cliqueroot$. We start by computing the number of classes of true twins in $G$. If $G$ has at least $2^k+k+1$ classes of true twins, then~$G$ is a \no-instance to the problem, as we show in the following claim.

\begin{claim}
Let $G$ be a graph and $H$ be a square root of $G$ such that $H-S$ is a complete graph, with $|S|= k$. Let $T_1,\ldots,T_t$ be a partition of $V(G)$ into classes of true twins. Then $t\leq 2^k+k$.
\end{claim}

\begin{claimproof}
Let $C=V(H)\setminus S$. Note that if $u,v\in C$ and $N_H(u)\cap S= N_H(v)\cap S$, then $u$ and $v$ are true twins in $G$. Thus, we have at most $2^k$ distinct classes of true twins among the vertices of~$C$, and at most $k$ among the vertices of $S$.
\end{claimproof}

Hence, from now on we assume that $G$ has at most $2^k+k$ classes of true twins. We exhaustively apply the following rule in order to decrease the size of each class of true twins in $G$.

\begin{Rule}\label{rule:ttclique}
If $|T_i|\geq 2^k+k+1$ for some $i$, delete a vertex from $T_i$.
\end{Rule}

The following claim shows that Rule~\ref{rule:ttclique} is safe.

\begin{claim}
If $G'$ is the graph obtained from $G$ by the application of Rule~\ref{rule:ttclique}, then $(G,k)$ and $(G',k)$ are equivalent instances of \cliqueroot.
\end{claim}

\begin{claimproof}
Let $G'=G-v$. First assume $(G,k)$ is a \yes-instance to \cliqueroot{} and let $H$ be a square root of $G$ that is a solution to this problem. Since $|T_i|\geq 2^k+k+1$ and $G$ has at most $2^k+k$ classes of true twins, by the pigeonhole principle there are two vertices $x,y\in T_i$ such that, in $H$, $x,y\notin S$ and $N_H[x]\cap S=N_H[y]\cap S$. That is, $x$ and $y$ are true twins in $H$ also. Thus, $H'=H-x$ is a square root for $G''=G-x$ such that $H'-S$ is a complete graph. Since $G'$ and $G''$ are isomorphic, we have that $(G',k)$ is a \yes-instance as well.

Now assume $(G',k)$ is a yes-instance to \cliqueroot{} and let $H'$ be a square root of $G'$ that is a solution to the problem. Note that $T_i\setminus\{v\}$ is a true twin class of $G'$ of size at least $2^k+k$. Thus, there exists $u\in T_i\setminus\{v\}$ such that, in $H'$, $u\notin S$. We can add $v$ to $H'$ as a true twin of $u$ and obtain a square root $H$ for $G$ such that $H-S$ is a complete graph.
\end{claimproof}

After exhaustive application of Rule~\ref{rule:ttclique}, we obtain an instance $(G',p',k)$ such that $G'$ contains at most $(2^{k}+k)^2$ vertices, since it has at most $2^k+k$ twin classes, each of size at most $2^k+k$. Moreover, $(G',k)$ and $(G,k)$ are equivalent instances of \cliqueroot. We can now check by brute force whether $(G',k)$ is \yes-instance to the problem. Since $G'$ has $2^{\Oh(k)}$ vertices, this can be done in time $2^{2^{\Oh(k)}} \cdot n^{\Oh(1)}$, which concludes the proof of the theorem.
\end{proof}


\section{Conclusion}

In this work, we showed that \dpqroot{} and its variants can be solved in $2^{2^{\Oh(k)}}\cdot n^{\Oh(1)}$ time. We also proved that the double-exponential dependence on $k$ is unavoidable up to Exponential Time Hypothesis, that is, the problem cannot be solved in $2^{2^{o(k)}}\cdot n^{\Oh(1)}$ time unless ETH fails. We also proved that the problem does not admit a  kernel
of subexponential in $k$ size
unless  $\P=\NP$. We believe that it would be interesting to further investigate the parameterized complexity of  $\mathcal{H}$-{\sc Square Root} for sparse graph classes $\mathcal{H}$ under structural parameterizations. The natural candidates are the \textsc{Distance-$k$-to-Linear-Forest Square Root} and \textsc{Feedback-Vertex Set-$k$ Square Root} problems, whose tasks are to decide whether the input graph has a square root that can be made a linear forest, that is, a union of paths, and a forest respectively by (at most) $k$ vertex deletions. Recall that the existence
of an \FPT{} algorithm for $\mathcal{H}$-{\sc Square Root} does not imply the same for subclasses of $\mathcal{H}$. However, it can be noted that the reduction from Lemma~\ref{lem:reduction} implies that 
our complexity lower bounds still hold and, therefore, we cannot expect that these problems would be easier.

Parameterized complexity of $\mathcal{H}$-{\sc Square Root} is widely open for other, not necessarily sparse, graph classes. We considered the \cliqueroot{} problem and proved that it is \FPT{} when parameterized by $k$. What can be said if we ask for a square root that is at deletion distance (at most) $k$ form a \emph{cluster graph}, that is, the disjoint union of cliques? We believe that our techniques allows to show that this problem is \FPT{} when parameterized by $k$ if the number of cliques is a fixed constant. Is the problem \FPT{} without this constraint?

\bibliographystyle{siam}
\bibliography{ref}

\end{document}